\newtheorem*{theorem*}{Theorem}
\newtheorem{lemma}{Lemma}
\newtheorem*{lemma*}{Lemma}
\newcommand*\circled[1]{\tikz[baseline=(char.base)]{
		\node[shape=circle,draw,inner sep=2pt] (char) {#1};}}
\title{Distributed Q-Learning with State Tracking for Multi-agent Networked Control}
\author{
 Hang Wang \\
  Arizona State University\\
  Tempe, AZ 85287-7206 \\
  \texttt{hwang442@asu.edu} \\
   \And
 Sen Lin \\
 Arizona State University\\
 Tempe, AZ 85287-7206 \\
 \texttt{slin70@asu.edu} \\
  \And
   Hamid Jafarkhani \\
  University of California, Irvine\\
  Irvine, CA 92697-2625 \\
  \texttt{hamidj@uci.edu} \\
    \And
  Junshan Zhang \\
  Arizona State University\\
  Tempe, AZ 85287-7206 \\
  \texttt{Junshan.Zhang@asu.edu} \\
}
\begin{document}
\maketitle
\begin{abstract}
This paper studies distributed Q-learning for Linear Quadratic Regulator (LQR) in a multi-agent network. The existing results often assume that agents can observe the global system state, which may be  infeasible in large-scale systems due to privacy concerns or communication constraints. In this work, we consider a setting  with unknown system models and no centralized coordinator. We devise a state tracking (ST) based Q-learning algorithm to design optimal controllers for agents. Specifically, we assume that agents maintain local estimates of the global state based on their local information and communications with neighbors. At each step, every agent updates its local global state estimation, based on which it solves an approximate Q-factor locally through policy iteration. Assuming decaying injected excitation noise during the policy evaluation, we prove that the local estimation  converges to the true global state, and  establish the convergence  of the proposed distributed ST-based Q-learning algorithm. The experimental studies corroborate our theoretical results by showing that our proposed method achieves comparable performance with the  centralized  case.
\end{abstract}


\section{Introduction}
Distributed control of multi-agent systems (MASs) has garnered much attention in the past decade, due to its wide applicability in real-world problems, e.g., firefighting unmanned aerial vehicles maneuver, distributed resource allocation and robot swarms, etc. One main objective in this context is to learn local controllers for  agents in a distributed manner so as to minimize the global  cost \cite{de2006decentralized}. 
For example, in the Linear Quadratic Regulator (LQR) control problem, the global objective is to minimize the sum of the local quadratic costs over all agents.

Nevertheless, the networked nature of MASs presents some unique challenges in designing distributed controllers. Observe that the agents are physically coupled with certain interconnections \cite{ding2019survey}, e.g., the buses in a microgrid are interconnected through structural links such as the power transmission lines.  Consequently, the controller synthesis at a bus has to account for the impact of other buses. To deal with the sophisticated coupling in MASs, the model-based distributed controller design has been studied in \cite{massioni2009distributed,antonelli2013interconnected,conte2016distributed}, where the interconnections among agents are modeled by a directed interaction graph to model the system dynamics. {However, these studies assume that the underlying system model is known, which may be infeasible in large-scale systems.}

To tackle the challenges in the MASs with unknown system models, data-driven approaches have emerged as a promising direction in learning local controllers.  Notably, data-driven Q-learning \cite{watkins1992q}, which is a model-free Reinforcement Learning (RL) approach \cite{bertsekas1995dynamic},  has been proposed to learn the optimal LQR controller online in the single agent case \cite{bradtke1994adaptive}. 
Motivated by this, some recent works (e.g., \cite{narayanan2016distributed,dizche2019sparse}) apply the Q-learning in the multi-agent LQR control and show that good performance can be achieved assuming that the knowledge of global state information is shared by a centralized coordinator \cite{narayanan2016distributed,dizche2019sparse}. Nevertheless, such a centralized coordinator is often not available in many scenarios.
It is therefore of great interest to study the MASs  where each agent can only learn state  information from neighbors with limited communication. Needless to say, this lacking of global state information inevitably makes the learning of the optimal controller more challenging,
calling for  distributed control based on \emph{partial observations}. 

In this work, we address the above problem by revisiting the  distributed LQR control problem with only partial observations of the global state. Specifically, we consider a distributed MAS where each agent has a discrete Linear Time Invariant (LTI) system with unknown dynamics. We assume that each agent can only share information with its neighbors over a communication graph. In particular, we focus on a more practical setting where the physical interconnection is different from the communication interconnection. This is often the case in the emerging cyber-physical systems, e.g., microgrid systems with distributed generators (DGs), the DGs are physically interconnected by a microgrid electric power network, and communicate in the cyberlayer \cite{bidram2014distributed}. 

Without careful design, the performance of distributed Q-learning can significantly degrade with only partial observations. To tackle this challenge, we propose a distributed Q-learning approach with a novel \emph{state tracking} strategy to facilitate the estimation of the global state through limited communication among neighboring agents. Intuitively, by exchanging  state estimations with neighboring agents, an individual agent would be able to improve its global state estimator as the information continuously diffuses across the network. Based on such a global state estimation, each agent then solves an \textit{approximate} Q-factor locally; and this learning process is carried out in parallel by all agents.

The main contributions of this work can be summarized as follows:

\begin{itemize}
	\item Considering distributed LQR control in MASs with only partial observations, we propose a novel distributed Q-learning approach with state tracking (ST-Q), where each agent first constructs a global state estimator based on local communication with its neighbors, and then solves an approximate Q-learning problem accordingly. 
	\item The convergence of distributed Q-learning algorithms in multi-agent LQR control has been underexplored. In this work, we fill this void and establish the convergence of the proposed distributed ST-Q algorithm.
	\item Compared with Q-learning under full observation \cite{narayanan2016distributed} and  Q-learning under partial observation only, our experimental results show that the proposed distributed ST-Q learning method achieves comparable performance with the full observation case. 
\end{itemize}

\section{Related Work}
{\bf{Distributed LQR Control.}} 
Distributed LQR control has recently garnered much attention. Notably, identical LTI system models across agents have been considered  which are coupled either in a global cost function  \cite{borrelli2008distributed} or in the state space \cite{vlahakis2019distributed}.
Since it is restrictive to assume identical systems for all agents, \cite{conte2016distributed} explores distributed model predictive control for heterogeneous LTI systems. 
Regarding the communication structure among agents, \cite{dong2010distributed}  requires all-to-all communications for the optimal control, and \cite{gorges2019distributed} assumes that agents share information with all their physically coupled neighbors. 
In this work, 
we consider a more challenging setting where (i) the system model parameters are unknown, and (ii) the communication topology is different from the system interconnection topology (cf. \cite{cheng2016gain}). 

{\bf{Multi-Agent Reinforcement Learning (MARL).}}
Taking a distributed Q-learning approach,
our focus is on MARL with partial observations (see,  e.g., Dec-POMDP \cite{bernstein2002complexity}). Information exchange is often utilized to facilitate the collaboration among agents. Notably, \cite{zhang2018fully} proposes a fully decentralized MARL where each agent shares local value function estimates with its neighbors to achieve a network-wide consensus.  In \cite{zhang2019distributed},  an approach is devised to enable each agent to communicate its local estimation of global optimal policy parameters with its neighbors. Note that both methods assume full  observation of the global state and control information to compute the gradient estimations. 
Assuming each agent solely has access to partial observations, 
a recent work \cite{li2019distributed} develops a policy gradient method where each agent shares an estimate of the global cost based on the local information only; further all agents act in parallel and have no control interaction with others. It is worth noting that the policy gradient method does not necessarily achieve the global optimum control policy \cite{carmon2018accelerated}.

Along a different line, an LQR control method using Q-learning is proposed \cite{bradtke1994adaptive}, which provides the first convergence guarantee on Q-learning based optimal control for the single agent case. A recent work \cite{alemzadeh2019distributed} presents a distributed Q-learning algorithm for coupled LTI systems with identical dynamics.  Assuming that the global state information is available through a central coordinator, \cite{narayanan2016distributed} establishes the convergence of distributed optimal controllers for coupled LTI systems. \cite{gorges2019distributed} studies the decentralized Q-learning, where the Q-function is calculated based on the local observations only and the system interconnection topology is assumed known. They use simulation studies to show that a near-optimal policy can be obtained. To fill the void, we propose a state tracking strategy to estimate the global state information at each agent based on its local information aggregated from neighbors. In such a way, the Q-factor can be solved more accurately at each agent by using the global state estimation. 

\section{Problem Formulation}
\subsection{Notation}
Throughout the paper, the set $\{1,2,\cdots,L\} \subseteq \mathbb{N}$ is denoted as $[L]$. The block-diagonal matrix $B$ with blocks $\{B_i\}_{i\in [L]}$ is denoted as $B = \textrm{diag}(B_1,B_2,\cdots,B_L)$. A column vector which stacks  subvectors $\{x_i\}_{i\in[L]}$ in a column is denoted as $X=\textrm{col}(x_1,x_2,\cdots,x_L)$. We use semicolon (;) to concatenate column vectors, hence $[x^{\top},u^{\top}]^{\top}=[x;u]$.  A graph is defined as $\mathcal{G}=(\mathcal{V},\mathcal{E})$, where  $\mathcal{V}$ is the set of nodes and $\mathcal{E} \subseteq \mathcal{V} \times \mathcal{V}$ is the set of edges connecting agents. For a node $i\in \mathcal{V}$, we denote $\mathcal{N}_i = \{ j \in \mathcal{V} | j\neq i, e_{ij}=(i,j) \in \mathcal{E}\}\cup \{i\}$ as the set of neighbors of node $i$ in the graph $\mathcal{G}$. We use $0_{n\times n}$ to represent a $n \times n$ zero matrix.

\subsection{Multi-Agent LTI System Model}
Consider a multi-agent network consisting of $L$ agents, where the LTI system dynamics at each agent 
$i \in [L]$ 
is given as follows:
\begin{equation}
\centering
x_i(t+1) = \textstyle \sum_{j=1}^{L}A_{ij}x_j(t) + B_i u_i(t)
\label{eqn:subsystem}
\end{equation}
where $x_i (t) \in \mathbb{R}^{n}$ is Agent $i$'s state vector and  $u_i (t) \in \mathbb{R}^{m}$ is its control input at time $t$. $A_{ij} \in \mathbb{R}^{n\times n}$ and  $B_i \in \mathbb{R}^{n \times m}$  are unknown system parameters. Putting  the system models across all agents in a more compact form, we have the following global system model:
\begin{equation}
X(t+1) = AX(t) + BU(t)
\label{eqn:globalsystem}
\end{equation}
where $X(t)=\textrm{col}(x_1(t),x_2(t),\cdots,x_L(t))$ is the global state vector, and $U(t)=\textrm{col}(u_1(t),u_2(t),\cdots,u_L(t))$ is the global control input vector. The global system matrix $A \in \mathbb{R}^{nL \times nL}$ is block-wise with entries $A_{ij}$ for each $i,j \in [L]$ and  $B = \textrm{diag}(B_1,B_2, \cdots, B_L)$. 

We further define a graph $\mathcal{G}^{d} = ([L],\mathcal{E}^{d})$ to model the interconnection topology, i.e., the state coupling, of the global system, where $[L]$ is the node (agent) set. Specifically, there exists an edge $e_{ij}^{d}\in\mathcal{E}^{d}$ between Agent $i$ and Agent $j$ if and only if they are interconnected, i.e., $A_{ij} \neq 0$.
Let $\mathcal{N}_i^{d}$ denote as the set of neighbors of Agent $i$ in the interconnection graph $\mathcal{G}^{d}$.

As is standard, we impose the following assumption on $(A, B)$  in this study.

\begin{restatable}[Stabilizability]{assumption}{asucontrollable}
	The system parameters $(A,B)$ in \eqref{eqn:globalsystem} are stabilizable.
	\label{asu:controllable}
\end{restatable}
Assumption~\ref{asu:controllable} indicates that there exists a control policy $\pi$ with $U(t) = \pi(X(t))$, such that the closed loop system $X(t+1) = AX(t)+B\pi(X(t))$ is asymptotically stable. 

\begin{figure}[t]
	\centering
	\includegraphics[width=0.43\linewidth]{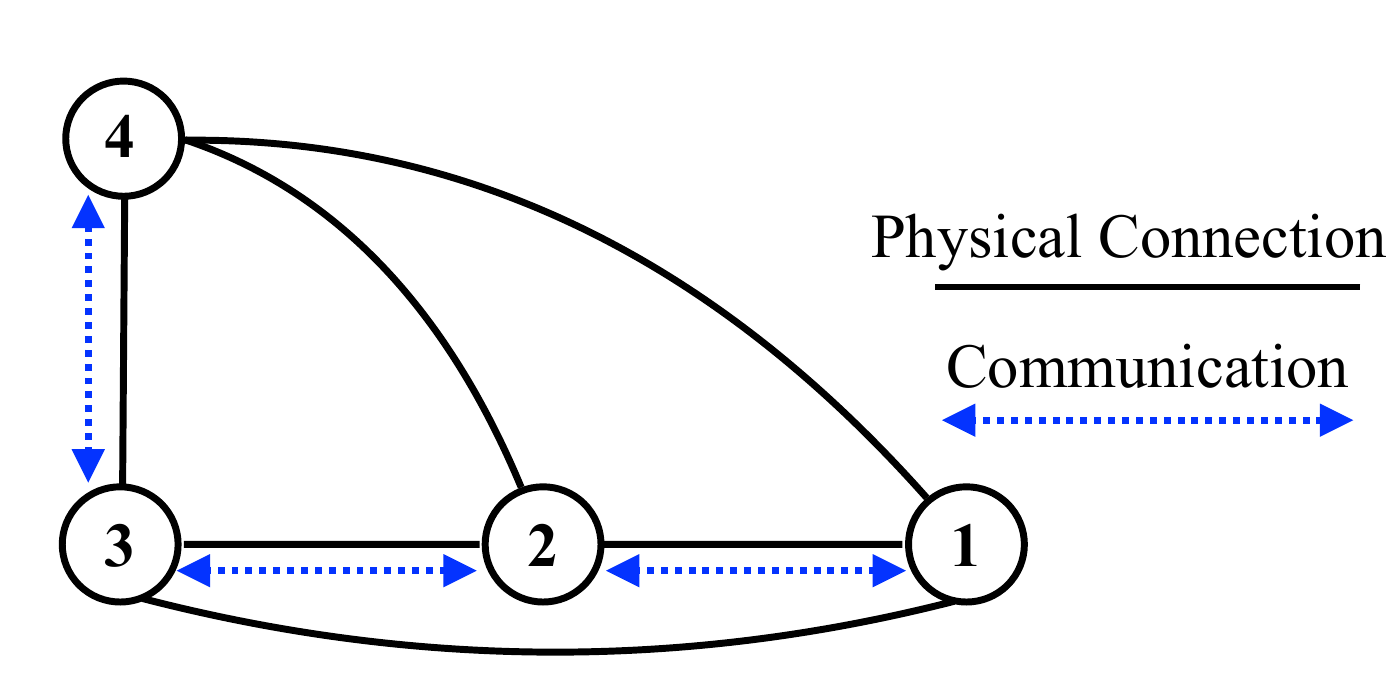}
	\caption{An example of the communication graph (dashed lines) and the interconnection graph (solid lines). All the agents are interconnected with physical connection, while the direct communication channel does not exist between neither Agent 4 and Agent 2 nor Agent 4 and Agent 1. }
	\label{fig:twograph}
\end{figure}


\subsection{Distributed LQR Control with Local Communication}

{\bf{Optimal distributed LQR control.}} For ease of expostion, we first present the optimal distributed LQR controller at each agent $i$ assuming  model parameters are known.
For the subsystem \eqref{eqn:subsystem} at each agent $i$, the  stage cost incurred by executing the control $u_i(t)$ in state $x_i(t)$ at time $t$ is given by
\begin{equation*}
\centering
g_i({x_i}(t), {u_i}(t))={x_i(t)^{\top}} {P_i} {x_i(t)}+{u_i}(t)^{\top} {R_i} {u_i(t)}
\end{equation*}
where $P_i \in\mathbb{R}^{n\times n}$ and $R_i\in\mathbb{R}^{m\times m}$ are positive semi-definite matrices. Let $J_i(x_i(0)) = \sum_{\tau=0}^{\infty}g_i(x_i(\tau),u_i(\tau))$ denote the local cost function at Agent $i$. The primary goal of the distributed LQR control is to minimize the sum of the local costs of all agents:
\begin{align}
\min_{\{u_i(\tau)\}}~\sum_{i=1}^{L}J_i(x_{i}(0)), ~~ \mathrm{s.t.}~\eqref{eqn:subsystem}.
\label{eqn:sumsubsystem}
\end{align}

Let $X(0)=\textrm{col}(x_1(0),x_2(0),\cdots,x_L(0))$, $P=\textrm{diag}(P_1, P_2, \cdots, P_L)$ and $R=\textrm{diag}(R_1, R_2, \cdots, R_L)$. It is clear that the distributed LQR control problem \eqref{eqn:sumsubsystem} is equivalent to the following LQR problem for the global system \eqref{eqn:globalsystem} with the initial state $X(0)$:
\begin{align}\label{globalLQR}
\min_{\{U(\tau)\}}~J(X(0)),
~~ \mathrm{s.t.}~\eqref{eqn:globalsystem},
\end{align}
with
\begin{align*}
J(X(0))
&={\sum}_{\tau = 0}^{\infty} X^{\top}(\tau)PX(\tau) + U^{\top}(\tau)QU(\tau).
\end{align*}

When the model parameters $A$ and $B$ are known, the optimal policy for \eqref{globalLQR} is given by linear feedback control, i.e.,
\begin{equation*}
U(t) = K^{*}X(t),
\end{equation*}
where $K^*=-(Q+B^TSB)^{-1}B^TSA$ and $S$ is the positive definite solution to the discrete Riccati equation:
\begin{align*}
S=A^TSA-A^TSB(Q+B^TSB)^{-1}B^TSA+P.
\end{align*}
The optimal control policy for each agent thus can be obtained as
\begin{align}\label{optimalcontroller}
u_i(t) = K_i^{*}X(t),
\end{align}
where the optimal controller $K_i^{*}$ is the $i$-th row of $K^*$. The LQR solution in this case can be efficiently computed via dynamic programming.
In the case when the model parameters $A$ and $B$ are unknown but each agent has a full observation of the global state $X(t)$, 
problem \eqref{eqn:sumsubsystem} can be solved efficiently by using reinforcement learning approaches  \cite{lewis2009reinforcement, narayanan2016distributed}. 

{\bf{Distributed LQR control with local communication.} } The primary focus of this paper is on distributed LQR control for a multi-agent network with unknown system parameters. 
Specifically, we define an undirected communication graph $\mathcal{G}^{c} = ([L],\mathcal{E}^{c})$ to model the information exchange in the multi-agent network. 
There exists an edge $e_{ij}^{c} \in \mathcal{E}^c$ between Agent $i$ and Agent $j$ if and only if they can communicate. 
Let $\mathcal{N}_i^{c}$ denote the set of neighbors of Agent $i$ in the communication graph $\mathcal{G}^{c}$. In particular,  we consider a general setting where the interconnection graph $\mathcal{G}^{d}$ and the communication graph $\mathcal{G}^{c}$ can be distinct, as illustrated in Figure \ref{fig:twograph}.
Since each agent does not have the full observation of the global state, it can only make its control decisions based on the local information, giving rise to the distributed LQR control that only relies on the partial observation (POD-LQR).

Let $x_{\mathcal{N}_i}(t) \in \mathcal{X}_{\mathcal{N}_i}$ denote the state information available for Agent $i$ at time $t$, which contains partial entries of the global state vectors. Agent $i$ then selects the local control input $u_i(t) \in \mathcal{U}_i$,  based on the information $x_{\mathcal{N}_i}(t)$ and a control policy $\tilde{\pi}_i$ with a linear feedback controller, i.e.,
\begin{align}
\tilde{\pi}_i: \mathcal{X}_{\mathcal{N}_i} \mapsto \mathcal{U}_i.
\label{eqn:localcontroller}
\end{align}
We further assume that $K_i$ is the feedback controller in the policy $\tilde{\pi}_i$. The goal of POD-LQR control is to find controllers that minimize the infinite horizon global cost  function $J(X(0))$:
\begin{align}
\min_{\{K_i\}}~J(X(0)) = {\textstyle\sum}_{i = 1}^{L}J_i(x_i(0)),~~\mathrm{s.t.}~ \textrm{\eqref{eqn:subsystem}, \eqref{eqn:localcontroller}}.   \label{eqn:localproblem}
\end{align}

In this work, we aim to achieve the optimal controller $K_i^*$ for each agent $i$ that is the same as in the case where the model parameters are known, by  solving Problem \eqref{eqn:localproblem} based on only a partial observation of the global state.

\section{Distributed Q-learning with State Tracking}

In this section, we propose a distributed Q-learning approach with state tracking to solve Problem \eqref{eqn:localproblem}, where each agent first constructs a global state estimator through communication with its neighbors, and then solves an approximate Q-learning problem locally using the state estimation. We start by presenting the preliminary on Q-learning with a full observation of the global state. Then, we present a state tracking scheme to facilitate the estimation of the global state through limited information exchange among neighboring agents.  Finally, the proposed state tracking based policy iteration algorithm is presented in detail.

\subsection{Global State based Q-Learning}\label{subsection:globalQ}

As mentioned earlier, Q-learning can be utilized to solve the distributed LQR control problem  \eqref{eqn:sumsubsystem} if each agent has a full observation of the global state $X(t)$. In what follows, we will briefly introduce the rationale behind Q-learning in the ideal case when the global state information is available at each agent.

Specifically, given the global state $X(t)$ and based on \eqref{optimalcontroller}, we consider the local control policy $\pi_i:~u_i(t)=K_i X(t)$ for some state feedback controller $K_i$. Then, the Q-factor for each agent $i$ can be defined as follows:
\begin{equation}
\centering
Q_i(x_i(t),u_i(t)) = g_i(x_i(t),u_i(t)) + J_i(x_i(t+1)),
\label{eqn:Qfunction}
\end{equation}
which gives the cumulative cost when agent $i$ starts from the state-control pair $(x_i(t),u_i(t))$ and follows the policy $\pi_i$ afterwards. Note that
\begin{equation*}
\centering
J_i(x_i(t))=Q_i(x_i(t),K_iX(t)).
\end{equation*}
The Bellman equation associated with the policy $\pi_i$ for the Q-factor can be written as
\begin{align}\label{bellmanQ}
&Q_i(x_i(t),K_i X(t))\nonumber\\
=& g_i(x_i(t),K_i X(t)) + Q_i(x_i(t+1),K_i X(t+1)),
\end{align}
and the corresponding Bellman optimality equation is
\begin{align}\label{bellmanQoptimal}
&Q^*_i(x_i(t),K^*_i X(t))\nonumber\\
=& g_i(x_i(t),K^*_i X(t)) + Q^*_i(x_i(t+1),K^*_i X(t+1)).
\end{align}
This implies that the optimal controller $K_i^*$ can be achieved as:
\begin{align*}
K_i^* = \arg \min_{K_i} Q^*_i(x_i(t),K_i X(t)).
\end{align*}
Therefore, to find the optimal controller $K_i^*$, it suffices to estimate the optimal Q-factor $Q^*_i$. And this can be achieved by using policy iteration where a sequence of  monotonically improved policies and Q-factors can be obtained, by running a policy evaluation step and then a policy improvement step in a recursive manner. For a better understanding of the Q-learning approach for LQR control, we start with the policy improvement step.

{\bf{Policy improvement.}} A key step in policy iteration is the policy improvement. Suppose we have determined the Q-factor $Q_i$ for a controller $K_i$ in the policy evaluation step. The policy improvement step aims to find a better controller:
\begin{align}\label{controllerupdate}
\centering
K_i^{\textrm{new}} = \arg \min_{K_i}(Q_i(x_i(t),K_i X(t))).
\end{align}
Note that the cost function $J_i$ is quadratic in the LQR control problem with a linear state feedback controller \cite{bertsekas1995dynamic}. Then, it can be shown that
\begin{equation*}
Q_i(x_i(t+1),K_iX(t+1))= x_i(t+1)^{\top}S_ix_i(t+1).
\end{equation*}
Here, $S_i$ is the cost matrix for the current controller $K_i$, which can be obtained by solving the discrete-time algebraic Riccati equation \cite{willems1971least}. Thus,  \eqref{eqn:Qfunction} can be rewritten as the following quadratic form:
\begin{align}\label{quadraticQ}
Q_i(x_i(t),u_i(t)) = [X(t);u_i(t)]^{\top}H_i[X(t);u_i(t)],
\end{align}
where $H_i$ is a symmetric block matrix defined as
\begin{align*}
\centering
H_i &=\begin{bmatrix}H_{i,11} & H_{i,12}\\H_{i,21}& H_{i,22}\end{bmatrix} =\begin{bmatrix}{\mathcal{A}_i}^{\top} {S_i \mathcal{A}_i}+\Tilde{P}_i &  {\mathcal{A}_i}^{\top} {S_i B_i}    \\  {B_i}^{\top} {S_i \mathcal{A}_i} &  {B_i}^{\top} {S_i B_i} + R_i\end{bmatrix}.
\end{align*}
Here, ${\mathcal{A}_i}$ is a row vector which stacks subvectors $\{A_{ij}\}_{j\in[L]}$ and $\Tilde{P}_i=\textrm{diag}(0_{n\times n}, \cdots, P_i,\cdots, 0_{n\times n})$ is a diagonal block matrix with the $(i,i)$-th block to be $P_i$. Based on the samples of the state-control pair $[X(t);u_i(t)]$, \eqref{controllerupdate} can be solved by using the first-order optimality condition:
\begin{equation}
K_i^{\textrm{new}} = -H_{i,22}^{-1}H_{i,21}.
\label{eqn:policyupdate}
\end{equation}
Summarizing, to obtain an improved controller $K_i^{\textrm{new}}$, it suffices to determine the Q-factor, in particular, the matrix $H_i$, in the policy evaluation step.

{\textbf{Policy evaluation.}} To determine the matrix $H_i$ in the policy evaluation step, along the same line as in \cite{bradtke1994adaptive}, we reformulate the quadratic form of $Q_i(x_i(t),u_i(t))$ in \eqref{quadraticQ} in a linear form parameterized by  parameter $\theta_i$:
\begin{equation}\label{linearQ}
Q_i(x_i(t),u_i(t)) = {y_i}(t)^{\top} {\theta_i},
\end{equation}
where ${y_i}(t)=[{x}_{1}^2(t),{x}_{1}(t){x}_{2}(t), \cdots, {x}_{L}(t)u_i(t),u_i^2(t)]$ is a vector containing all of the quadratic basis over the elements in $[X(t);u_i(t)]$, and ${\theta_i}$ is a vector in $\mathbb{R}^{(Ln+m)(Ln+m+1)/2}$. Here, the parameter ${\theta_i}$ is obtained through some manipulation after removing the redundant elements of the symmetric matrix $H_i$, i.e., the elements in the lower triangle of $H_i$. It is clear that in order to determine $H_i$, it suffices to determine the parameter $\theta_i$.

Based on the linear form \eqref{linearQ}, it is clear that the Bellman equation \eqref{bellmanQ} is equivalent to the following: 
\begin{equation}
g_i(x_i(t),u_i(t)) = ({y_i}(t)- {y_i}(t+1))^{\top}{\theta_i} \triangleq \phi_i(t)^{\top}{\theta_i},
\label{eqn:leastform}
\end{equation}
where $\phi_i(t) = {y_i}(t)- {y_i}(t+1)$. Note that $\phi_i(t)$ and the stage cost $g_i(x_i(t),u_i(t))$ can be known given the global state $X(t)$ and the control input $u_i(t)$. With sufficient samples of $(\phi_i(t), g_i(x_i(t),u_i(t)))$, $\theta_i$ can be obtained by solving a least square estimation problem.

\subsection{State Tracking}

It can be seen from \eqref{eqn:leastform} that the global state $X(t)$ is required to determine the parameter $\theta_i$ in the policy evaluation step, which however is not available in the POD-LQR control problem. To address this issue, we propose a state tracking scheme to facilitate the estimation of the global state $X(t)$ through the information exchange among agents over the communication graph $\mathcal{G}^{c}$.

More specifically, at time $t$  each agent $i$ maintains a local estimation ${Z}_i(t)$  of the global state $X(t)$: 
\begin{align*}
\centering
Z_i(t)= \textrm{col}(\bar{x}_{i1}(t),\bar{x}_{i2}(t),\cdots, \bar{x}_{iL}(t)),
\end{align*}
where $\bar{x}_{ij}(t)$ is  the estimation of Agent $j$'s state $x_j(t)$ at Agent $i$ for time $t$. In particular, $\bar{x}_{ii}(t)=x_i(t)$. Next, each agent communicates with and aggregates information from its neighbors in the communication graph $\mathcal{G}^{c}$ to update the local estimation ${Z}_i(t+1)$ as follows.

\textbf{Communication among neighboring agents.} At time $t+1$, the communication among agents includes two steps. First, each agent $i$ receives the state $x_j(t+1)$ from every neighbor $j\in\mathcal{N}_i^{c}$,
and then updates the corresponding entries in its  estimation $Z_i(t)$, i.e.,
\begin{equation*}
\centering
\bar{x}_{ij}(t) \to x_j(t+1),~\forall j \in \mathcal{N}_i^{c}.
\end{equation*}
Consequently, an updated estimation $\hat{Z}_i(t+1)=\textrm{col}(\hat{x}_{i1}(t+1),\hat{x}_{i2}(t+1),\cdots, \hat{x}_{iL}(t+1))$ can be obtained with
\begin{equation*}
\hat{x}_{ij}(t+1) =
\begin{cases}
\bar{x}_{ij}(t)&\forall j \notin \mathcal{N}_i^{c},\\
x_{j}(t+1) &\forall j \in \mathcal{N}_i^{c}.
\end{cases}
\end{equation*}
Next, each agent $i$ shares its updated global state estimation $\hat{Z}_i(t+1)$ with its neighbors in $\mathcal{G}^{c}$.

\textbf{Update of global state estimation.} After receiving the global state estimation $\hat{Z}_i(t+1)$ from the neighboring agents, Agent $i$ reconstructs a new estimation $Z_i(t+1)$ by aggregating all available information. In particular, for $j\in\mathcal{N}_i^{c}$, Agent $i$ has the accurate state information $x_j(t+1)$ of Agent $j$; for $j \notin \mathcal{N}_i^{c}$, Agent $i$ computes the state estimation $\bar{x}_{ij}(t+1)$ by taking a weighted average of the corresponding estimations $\hat{x}_{kj}(t+1)$ from its neighbors $k\in\mathcal{N}_i^{c}$. To model this `weighting' process, a doubly stochastic weight matrix, $W=[w_{ij}]\in\mathbb{R}^{L\times L}$, is used where $w_{ij}>0$ if and only if $(i,j)\in\mathcal{E}^{c}$. Otherwise, $w_{ij}=0$. The specific update rule is shown as following
\begin{equation}
\bar{x}_{ij}(t+1) =
\begin{cases}
\sum_{k=1}^{L} w_{ik} \hat{x}_{kj}(t+1)&\forall j \notin \mathcal{N}_i^{c},\\
x_j(t+1) &\forall j \in \mathcal{N}_i^{c}.
\end{cases}
\label{eqn:updateState}
\end{equation}

\begin{figure}
	\centering
	\includegraphics[width=.5\columnwidth]{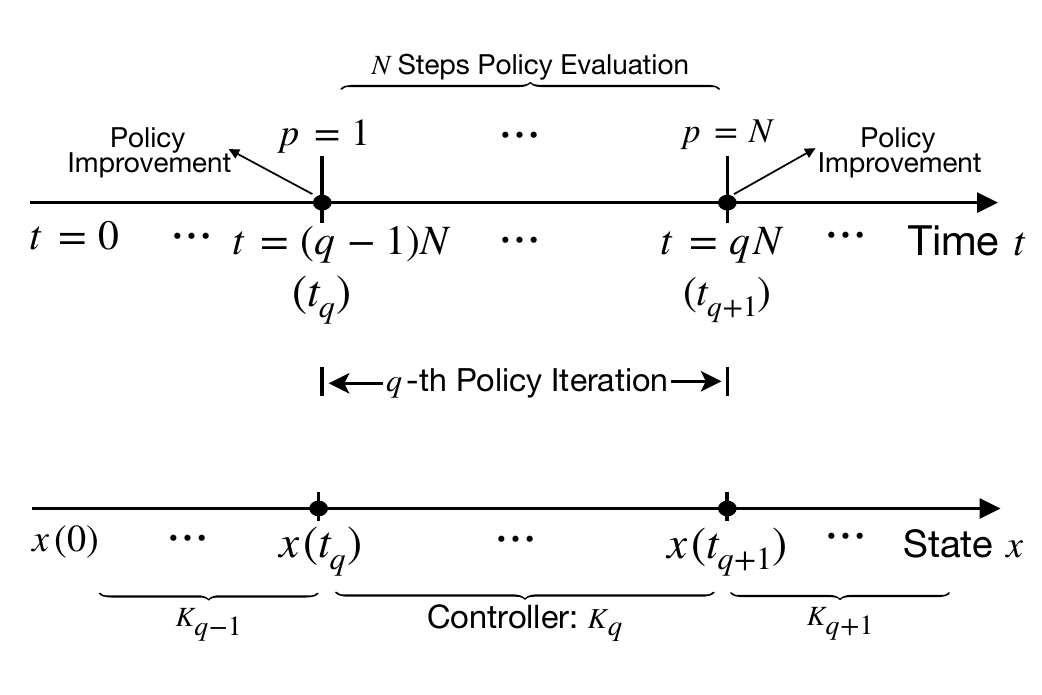}
	\caption{Illustration of the time scales in Algorithms~(\ref{alg:gst}) and (\ref{alg:qlearning}): $t$ denotes the time steps, and $t_q$ denotes the time instance for the $q$-th policy iteration. Policy evaluation is carried out $N$ times within each policy iteration. }
	\label{fig:timescale}
\end{figure}
\renewcommand\theContinuedFloat{\alph{ContinuedFloat}}
\begin{algorithm}[t]
	\ContinuedFloat*
	\caption{ST based Policy Evaluation (ST-E)}
	\begin{algorithmic}[1]
		\Require $K_{iq}$: evaluation controller, $\eta_i(t)$:  excitation noise.
		\State $p=1$.
		\For{$p=1,\cdots,N$}
		\State Apply $u_i(t) = -K_{iq}Z_i(t) + \eta_i(t)$.
		\State Measure $x_i(t+1)$ and receive $x_j(t+1)$, $j\in \mathcal{N}_i^{c}$.
		\State Receive $\hat{Z}_j(t+1)$ from all $j \in \mathcal{N}_i^{c}$ and update $Z_i(t+1)$ following \eqref{eqn:updateState}.
		\State Obtain $u_i(t+1) = -K_{iq}Z_i(t+1)$.
		\State Update $\hat{\theta}_{iq}(p)$ using \eqref{eqn:gradient}.
		\State Set $p=p+1$ and $t=t+1$.
		\EndFor
		\State \textbf{Return} $\hat{\theta}_{iq}$.
	\end{algorithmic}
	\label{alg:gst}
\end{algorithm}
\begin{algorithm}[h!]
	\ContinuedFloat
	\caption{ST based Q-learning (ST-Q)}
	\begin{algorithmic}[1]
		\Require $K_{i1}$: initial stable controller, $\theta_{i1}(0)=0$: initial estimation, $q=1$: policy improvement index, $t=0$: time index, $\varepsilon_{K}$: tolerance error, $x_i(0)$: initial state, $Z_i(0)$: initial global state estimation.
		\Repeat
		\For{Agent $i=1,\cdots,L$}
		\State Estimate $\theta_{iq}$ through Algorithm~(\ref{alg:gst}).
		\EndFor
		\For{Agent $i=1,\cdots,L$}
		\State Obtain $H_{iq}$ from $\hat{\theta}_{iq}(N)$.
		\State Update policy $K_{i(q+1)} = -H_{iq,22}^{-1}H_{iq,21}$.
		\State Set $\hat{\theta}_{i(q+1)}(0) = \hat{\theta}_{iq}(N)$.
		\EndFor
		\State Set $q=q+1$.
		\Until{$\| \hat{\theta}_{i(q+1)} - \hat{\theta}_{iq}\| < \varepsilon_{K}$, $\forall i \in [L]$}
	\end{algorithmic}
	\label{alg:qlearning}
\end{algorithm}

\subsection{ST-based Policy Iteration for Q-learning}

Based on the estimation $Z_i(t)$ of the global state $X(t)$ achieved by state tracking, each agent $i$ now is able to carry out an approximate Q-learning locally by using policy iteration to solve the POD-LQR control problem \eqref{eqn:localproblem}. As mentioned earlier in Section \ref{subsection:globalQ}, the policy iteration includes two main steps, i.e., the policy evaluation and the policy improvement step. We summarize the important steps below, and more details can be found in Algorithms~(\ref{alg:gst}) and (\ref{alg:qlearning}).

Specifically, each agent $i\in [L]$ first starts with a stabilizing initial controller $K_{i1}$, and iteratively runs the policy evaluation and the policy improvement as shown in Fig.~\ref{fig:timescale}. At iteration $q$, 
\begin{itemize}
	\item\textit{Policy Evaluation.} 
	In this step, each agent $i$ aims to determine the Q-factor $Q_{iq}$ for a given controller $K_{iq}$. As shown in \eqref{linearQ}, it suffices to determine the parameter $\theta_{iq}$. To this end, we resort to least square estimation  based on \eqref{eqn:leastform} with $N$ samples per policy evaluation step:
	\begin{align}\label{lse}
	\centering
	\min_{\theta_{iq}}~ \sum\nolimits_{t=t_q}^{t_{q}+N-1} \|\bar{\phi}_i(t)^T\theta_{iq}-g_i(x_i(t),u_i(t))\|^2,
	\end{align}
	where $\bar{\phi}_i(t) = {\bar{y}_i}(t)- {\bar{y}_i}(t+1)$ and ${\bar{y}_i}(t)=[\bar{x}_{i1}^2(t),\bar{x}_{i1}(t)\bar{x}_{i2}(t),\allowbreak \cdots, \bar{x}_{iL}(t)u_i(t),u_i^2(t)]$ is the vector containing all the quadratic basis over the elements in the estimated global information vector $[Z_i(t);u_i(t)]$. Here,  $u_i(t)=-K_{iq}Z_i(t)+\eta_i(t)$ where $\eta_i(t)$ is the input noise to ensure that the system at each agent is persistently excited. For ease of exposition, we denote $g_i(x_i(t),u_i(t))$ as $g_i(t)$, and consider $(\bar{\phi}_i(t),g_i(t))$ as a sample for the least square estimator. 
	
	To solve the least square estimation problem \eqref{lse}, an online gradient descent method is run for $N$ iterations: At each iteration $p\in [1,N]$, each agent (i) constructs the global state estimation $Z_i(t)$ and $Z_i(t+1)$ so as to obtain a sample $(\bar{\phi}_i(t),g_i(t))$
	as shown in Algorithm~(\ref{alg:gst}), and (ii) updates the estimation of $\theta_{iq}$ by using gradient descent with a learning rate $\alpha$:
	\begin{equation}
	\centering
	\textstyle \hat{\theta}_{iq}(p+1) = \hat{\theta}_{iq}(p) - \alpha\bar{\phi}_i(t)\big(\hat{\theta}_{iq}(p)^{\top}\bar{\phi}_i(t) - g_i(t) \big), 
	\label{eqn:gradient}
	\end{equation}
	where $\hat{\theta}_{iq}(p)$ is the estimation of $\theta_{iq}$ at policy evaluation step $p$.
	
	\item\textit{Policy Improvement.} Given $\hat{\theta}_{iq}=\hat{\theta}_{iq}(N)$ obtained in the policy evaluation step, each agent is able to reconstruct the matrix $\hat{H}_{iq}$, so that the controller can be updated as follows:
	\begin{align*}
	{K}_{i(q+1)} = -\hat{H}_{iq,22}^{-1}\hat{H}_{iq,21}.
	\end{align*}
\end{itemize}

This policy iteration procedure stops if the following condition is satisfied:
\begin{equation*}
\|\hat{\theta}_{i(q+1)} - \hat{\theta}_{iq}\|<\varepsilon_K, \forall i\in [L]
\end{equation*}
where $\varepsilon_K$ is a predefined threshold for the estimation error.


\section{Convergence Analysis}\label{section:convergence}
In this section, we establish the convergence of the proposed ST-Q learning algorithm. To this end, we first make a few standard assumptions  for multi-agent reinforcement learning \cite{nedic2018distributed,doan2019finite}.
\begin{restatable}[Communication Connectivity]{assumption}{asuconnect}
	The communication graph $\mathcal{G}^{c}$ is connected and static.
	\label{asu:connect}
\end{restatable}
\begin{restatable}[Weight Matrix]{assumption}{asuweightmatrix}
	There exists a positive constant $\eta$ such that the weight matrix $W=[w_{ij}]\in \mathbb{R}^{L \times L}$ is doubly stochastic and  $w_{ij} \geq \eta$ if $j \in \mathcal{N}_i^{c}$. In particular, $w_{ii} \geq \eta$, $\forall i \in [L]$. 
	\label{asu:weightmatrix}
\end{restatable}

\begin{restatable}[Decaying Excitation  Noise]{assumption}{asudecaynoise}
	The input noise $\eta(t)$ is with the decaying factor $\upsilon(p)$, 
	\begin{equation*}
	\begin{aligned}
	\upsilon(p) = c^{p},&\quad 0<c<1,\\
	\eta(t) = \upsilon(p)\beta(t),&\quad t_q \leq t < t_{q+1}.
	\end{aligned}
	\end{equation*}
	The input noise for the global system is $E(t) = \Upsilon(p) \boldsymbol{\beta}(t)$. The system is further persistently excited with the input noise, i.e., $\forall i\in [L]$, $\forall q$:
	\begin{equation*}
	mI \leq  \textstyle\sum_{t=t_q}^{t_{q}+N-1}\phi_i(t)\phi_i^{\top}(t) \leq MI,
	\end{equation*}
	where $0< m \leq M < \infty$.
	\label{asu:decaynoise}
\end{restatable}

\begin{restatable}[Step Size of Gradient Descent]{assumption}{asulr} The step size in \eqref{eqn:gradient} is fixed and satisfies: $0< \alpha < 1/M$.
	\label{asu:lr}
\end{restatable} 
Assumptions \ref{asu:connect} and \ref{asu:weightmatrix} are imposed to facilitate the information diffusion across the network. The excitation condition in Assumption \ref{asu:decaynoise} is to guarantee the convergence of policy evaluation. Along the same line as in \cite{goodwin2014adaptive,bradtke1994adaptive}, this condition can be met by adding sinusoidal noise of various frequencies to $u_i(t)$. Moreover, we further assume that the  input noise $\eta_i(t)$ is decaying for a more accurate system state estimation as the algorithm gradually converges.

For convenience, we restate in the following lemma the convergence  result on  distributed Q-learning with full global state observations \cite{bradtke1994adaptive,narayanan2016distributed}.

\begin{restatable}[Convergence of Distributed Q-learning with Full Observation]{lemma}{lemmafullobservation} 
	Suppose that Assumptions \ref{asu:controllable}, \ref{asu:decaynoise}, \ref{asu:lr} are satisfied, and $K_{i1}$ is a stabilizing controller. There exists $N<\infty$ such that the sequence of stabilizing controllers $\{K_{iq}\}_{q=1}^{\infty}$ generated by the  Q-learning Policy Iteration mechanism with global state information converges, i.e., $\forall i \in [L]$:
	\begin{equation*}
	\lim_{q \rightarrow \infty} \| {K}_{iq} -  {K}_i^{*}\| = 0,
	\end{equation*}
	where ${K}_i^{*}$ is the optimal feedback controller. 
	\label{lemma:fullobservation}
\end{restatable}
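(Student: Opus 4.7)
The plan is to decompose the analysis according to the two-timescale structure visible in Algorithms~1 and 2: an inner loop (policy evaluation via the gradient recursion \eqref{eqn:gradient}) that returns an estimate $\hat\theta_{iq}$ of $\theta_{iq}$, and an outer loop (policy improvement) generating the sequence $\{K_{iq}\}$. Under the full-observation hypothesis of this lemma, $Z_i(t)=X(t)$ and $\bar\phi_i(t)=\phi_i(t)$, so the Bellman equation \eqref{eqn:leastform} is satisfied exactly by the true $\theta_{iq}$ associated with $K_{iq}$. My strategy is to (a) certify that each inner loop returns an estimate with a contractable error by taking $N$ large enough, and then (b) invoke the classical convergence of Hewer/Kleinman-type policy iteration for LQR to pass from the outer-loop iterates to $K_i^{*}$.

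For the inner loop, note that \eqref{eqn:gradient} is a (deterministic once the sample path is fixed) gradient descent on the quadratic loss whose Hessian is $\sum_{t=t_q}^{t_q+N-1}\phi_i(t)\phi_i^{\top}(t)$. Assumption~\ref{asu:decaynoise} sandwiches this Hessian between $mI$ and $MI$, and Assumption~\ref{asu:lr} gives $0<\alpha<1/M$. A direct computation produces the error recursion
\begin{equation*}
\hat\theta_{iq}(p+1)-\theta_{iq}=\bigl(I-\alpha\,\phi_i(t)\phi_i^{\top}(t)\bigr)\bigl(\hat\theta_{iq}(p)-\theta_{iq}\bigr),
\end{equation*}
so that one full sweep of $N$ samples contracts the error by at most $\|I-\alpha\sum_t\phi_i\phi_i^{\top}\|\le 1-\alpha m<1$. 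Hence $\|\hat\theta_{iq}(N)-\theta_{iq}\|$ can be driven below any prescribed tolerance by choosing $N$ sufficiently large, and the decaying factor $\upsilon(p)=c^{p}$ in Assumption~\ref{asu:decaynoise} ensures that the excitation contribution to the Bellman residual vanishes, so the estimation target $\theta_{iq}$ is consistent with the currently applied controller $K_{iq}$.

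For the outer loop, once $\theta_{iq}$ (hence $H_{iq}$) is known, the update $K_{i(q+1)}=-H_{iq,22}^{-1}H_{iq,21}$ coincides with one step of the Hewer/Kleinman policy iteration on the discrete-time algebraic Riccati equation associated with $(A,B,P,R)$. Classical results then guarantee, under Assumption~\ref{asu:controllable} and a stabilizing initialization $K_{i1}$, that the iterates (i) remain stabilizing, (ii) produce a monotonically decreasing sequence of cost matrices $S_q$ in the Loewner order, and (iii) converge to the unique positive definite solution $S^{*}$, and consequently $K_{iq}\to K_i^{*}$. With the inner-loop error made small via step (a), a standard perturbation argument for the Riccati recursion transports this convergence to the perturbed iterates produced by the algorithm.

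The main obstacle is the coupling between the finite-sample error at the inner loop and the stability/contraction margin of the outer loop. One has to show that the error in $\hat\theta_{iq}(N)$ is small enough, uniformly in $q$, that $K_{i(q+1)}$ remains stabilizing (so the trajectory stays bounded and Assumption~\ref{asu:decaynoise} continues to hold at the next iteration) and still realizes a strict cost decrease comparable to an exact Hewer step. This is precisely where the decaying excitation noise is critical: it preserves a uniform lower bound $m$ on the information matrix along the entire run while letting the excitation-induced bias shrink geometrically, so that the outer-loop stability margin and contraction constant can be chosen independently of $q$, yielding $\lim_{q\to\infty}\|K_{iq}-K_i^{*}\|=0$ for every $i\in[L]$.
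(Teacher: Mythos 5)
Your proposal follows essentially the same route as the paper: derive the SGD error recursion $\hat\theta_{iq}(p+1)-\theta_{iq}=\bigl(I-\alpha\phi_i(t)\phi_i^{\top}(t)\bigr)\bigl(\hat\theta_{iq}(p)-\theta_{iq}\bigr)$ (using that the true $\theta_{iq}$ satisfies the Bellman equation exactly so the residual term vanishes), conclude that the accumulated contraction factor $\varepsilon_N\to 0$ as $N\to\infty$ under persistent excitation, and then hand the outer loop off to the classical Bradtke/Hewer-type approximate policy iteration argument, which is precisely what the paper does by reducing to \cite{bradtke1994adaptive} and \cite{narayanan2016distributed}. The only quibble is that $\prod_{\tau}\bigl(I-\alpha\phi_i(\tau)\phi_i^{\top}(\tau)\bigr)$ is not $I-\alpha\sum_{\tau}\phi_i(\tau)\phi_i^{\top}(\tau)$, so your bound $1-\alpha m$ requires the standard persistent-excitation product-contraction lemma rather than the literal identification you wrote; this does not change the conclusion.
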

\begin{proof}[Proof Sketch]
	First, we show that by replacing recursive least squares (RLS) with stochastic gradient descent (SGD) in the adaptive policy iteration algorithm proposed in \cite{bradtke1994adaptive}, the policy iteration algorithm also generates a sequence of stabilizing controls converging to the optimal in the single agent case. Under the setting  when agents have full observation of the global state, \cite{narayanan2016distributed} considers a policy iteration algorithm with the RLS estimation method and provides the convergence proof by utilizing the result in \cite{bradtke1994adaptive}. Following the same line in \cite{narayanan2016distributed} and the result obtained in the preceding step, this lemma can be proved. The full proof is in Appendix \ref{app:A}.
\end{proof}

To establish the convergence of the proposed ST-Q learning approach, we first evaluate the estimation error of $\hat{\theta}_{iq}$ in Algorithm~(\ref{alg:gst}) with respect to $\theta_{iq}$ by characterizing the convergence performance of the global state estimation obtained by state tracking. 

\begin{restatable}[Convergence of Parameter Estimation]{lemma}{lemmaestimationerror}
	Under Assumptions \ref{asu:controllable}-\ref{asu:lr}, there exists $N<\infty$, such that 
	
	\begin{itemize}
		\item[(a)] the global state estimation error is bounded above by some arbitrarily small $\delta>0$, i.e., $\forall i\in [L]$, $\forall q$:
		\begin{equation*}
		\|{Z}_i(t_q+N)- {X}(t_q+N)\| \leq \delta,
		\end{equation*}
		\item[(b)] the estimation error of $\theta_i$ in \eqref{eqn:leastform} is bounded above by some arbitrarily small $\xi>0$ when $q$ is large enough, i.e., $\forall i \in [L]$:
		\begin{equation*}
		\|\theta_{iq} - \hat{\theta}_{iq}\|\leq \xi,
		\end{equation*}
		where $\hat{\theta}_{iq}$ is an estimate obtained by the ST-based approach and $\theta_{iq}$ is obtained with full observations. Note that  $\hat{\theta}_{iq}= \hat{\theta}_{iq}(N)$, ${\theta}_{iq}= {\theta}_{iq}(N)$.
	\end{itemize}
	\label{lemma:estimationerror}
\end{restatable}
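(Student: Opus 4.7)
The plan is to prove the two claims sequentially: part (a) is essentially a distributed consensus result with a time-varying reference, and part (b) then follows by combining part (a) with the full-observation convergence result from Lemma~\ref{lemma:fullobservation} and the stability of the stochastic gradient update in \eqref{eqn:gradient}. The decaying excitation noise in Assumption~\ref{asu:decaynoise} plays a central role in both parts: it ensures that the ``target'' the estimators must track is itself shrinking, and it keeps the gradient noise small as $q$ grows.

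\textbf{Proof of (a).} I will introduce the tracking error $\epsilon_{ij}(t) := \bar{x}_{ij}(t) - x_j(t)$ and, since $\bar{x}_{ij}(t) = x_j(t)$ whenever $j \in \mathcal{N}_i^{c}$, focus on the indices $j \notin \mathcal{N}_i^{c}$. Using the update rule \eqref{eqn:updateState} together with $\sum_k w_{ik} = 1$, one obtains the recursion
\begin{equation*}
\epsilon_{ij}(t+1) = \sum_{k \,:\, j \notin \mathcal{N}_k^{c}} w_{ik}\,\epsilon_{kj}(t) + \Big(\!\sum_{k \,:\, j \notin \mathcal{N}_k^{c}} w_{ik}\!\Big)\bigl(x_j(t)-x_j(t+1)\bigr),
\end{equation*}
which stacks over $i$ into a linear consensus recursion driven by a doubly stochastic submatrix of $W$ and perturbed by $x_j(t+1)-x_j(t)$. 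Assumptions~\ref{asu:connect} and \ref{asu:weightmatrix} give a spectral gap $\rho < 1$ for the averaging operator on the ``disagreement'' subspace. For the driving perturbation, I will write the global closed-loop dynamics as $X(t+1) = (A-BK_q)X(t) + BK_q\Delta(t) + BE(t)$, where $\Delta(t)$ collects the estimation errors $Z_i(t)-X(t)$ and $K_q$ is the stacked stabilizing controller; combined with the geometric decay $\upsilon(p)=c^p$ of the injected noise, a standard small-gain / induction argument shows that both $\|X(t+1)-X(t)\|$ and $\|\epsilon_{ij}(t)\|$ can be made uniformly smaller than any prescribed $\delta$ for all $t \in [t_q, t_q + N]$ provided $N$ is large enough (so enough averaging has occurred) and $q$ is large enough (so the excitation noise has decayed). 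Telescoping with $\rho$ produces the desired bound $\|Z_i(t_q+N)-X(t_q+N)\| \le \delta$.

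\textbf{Proof of (b).} Given the state-tracking bound from (a), the surrogate regressor $\bar{\phi}_i(t) = \bar{y}_i(t) - \bar{y}_i(t+1)$ differs from its full-observation counterpart $\phi_i(t)$ by a quantity that is quadratic-Lipschitz in $\delta$ and the (bounded) state norms, so $\|\bar{\phi}_i(t)-\phi_i(t)\| = O(\delta)$ uniformly across one policy-evaluation window. The SGD update \eqref{eqn:gradient} for $\hat{\theta}_{iq}$ can then be viewed as a perturbed version of the corresponding full-observation SGD update that, by Lemma~\ref{lemma:fullobservation} and Assumption~\ref{asu:lr}, drives $\theta_{iq}$ to the true Q-factor parameter. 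Writing $e_{iq}(p) := \hat{\theta}_{iq}(p) - \theta_{iq}(p)$ and expanding \eqref{eqn:gradient} yields a recursion of the form $e_{iq}(p+1) = (I - \alpha \bar{\phi}_i\bar{\phi}_i^\top)e_{iq}(p) + O(\delta)$, where the linear part is contractive thanks to Assumptions~\ref{asu:decaynoise} and \ref{asu:lr}. Iterating $N$ times and combining with the contraction of $\theta_{iq}$ toward the optimum for large $q$ yields $\|\theta_{iq}-\hat{\theta}_{iq}\| \le \xi$ for any preset $\xi>0$, upon choosing $\delta$ small enough in part (a).

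\textbf{Main obstacle.} The delicate point is the two-way coupling between the state-tracking error and the global state trajectory: the controllers $K_{iq}$ are applied to the estimates $Z_i(t)$ rather than to $X(t)$, so a large tracking error can in principle destabilize the global system, which would in turn enlarge $\|x_j(t+1)-x_j(t)\|$ and worsen the tracking error further. To close this loop I will need to argue by induction on $q$ that the closed-loop with estimated states remains stable, using the fact that $K_{iq}$ is close to a stabilizing controller (by the inductive hypothesis and Lemma~\ref{lemma:fullobservation}) and that the driving perturbation $BK_q\Delta(t)+BE(t)$ is uniformly small because of both the decaying excitation and the averaging contraction. Establishing this joint induction, rather than treating tracking and parameter estimation as decoupled subproblems, is the crux of the argument.
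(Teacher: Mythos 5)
Your proposal is correct in outline and shares the paper's overall architecture: part (a) as a perturbed-consensus/tracking argument whose driving term $x_j(t+1)-x_j(t)$ is shown to vanish via closed-loop stability and the decaying excitation, and part (b) as a perturbation analysis of the SGD recursion \eqref{eqn:gradient} whose contraction comes from persistent excitation and the step-size bound. The routes differ in two places. For (a), the paper splits $\|Z_i(t)-X(t)\|$ into $\|x_k(t)-\bar{x}_{av,k}(t)\|$ plus $\|\bar{x}_{av,k}(t)-\bar{x}_{ik}(t)\|$ and applies a consensus lemma for the full doubly stochastic $W$ (geometric decay of $W^{\tau}-\frac{1}{L}\mathbf{11}^{\top}$) to the second piece; you instead track $\epsilon_{ij}(t)=\bar x_{ij}(t)-x_j(t)$ directly, which yields an iteration driven by the \emph{substochastic} matrix $\{w_{ik}: j\notin\mathcal{N}_k^c\}$, not a doubly stochastic one, so the "spectral gap on the disagreement subspace" is not the right contraction mechanism there — what you actually need is that, by connectivity (Assumption~\ref{asu:connect}) and the lower bound $\eta$ on the weights (Assumption~\ref{asu:weightmatrix}), the product of these substochastic matrices over sufficiently many steps has norm strictly below one because every agent is reached by the "anchored" agents that observe $x_j$ exactly. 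This is a fixable imprecision rather than a failure. For (b), your one-step error recursion $e_{iq}(p+1)=(I-\alpha\hat{\phi}_i\hat{\phi}_i^{\top})e_{iq}(p)+O(\delta)$ is a cleaner formulation than the paper's full unrolling into products $\Pi_i$ and forcing terms $G_i$, though you should make explicit that (i) the accumulated $O(\delta)$ terms over $N$ steps are summable only because they are geometrically weighted by the contraction factors, and (ii) the persistent-excitation bound in Assumption~\ref{asu:decaynoise} is stated for $\phi_i$, so a separate argument (as the paper implicitly assumes with $\hat{c}_{\pi}<1$) is needed to transfer contractivity to the products built from $\hat{\phi}_i$. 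Finally, your "main obstacle" paragraph identifies a coupling that the paper actually glosses over: its trajectory analysis writes $X(t_q+1)=(A+BK_q)X_q+BE(1)$ as if the control were applied to the true state, whereas Algorithm~(\ref{alg:gst}) applies $u_i(t)=-K_{iq}Z_i(t)+\eta_i(t)$; your proposed joint induction on stability of the estimated-state closed loop is a more careful treatment of this point than what appears in Appendix~\ref{app:C}.
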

\begin{proof}[Proof Sketch]
	(a) First define  $\allowbreak \epsilon_{ik}(t) =  \sum_{j \in \mathcal{N}_k} w_{ij} ({x}_{k}(t)-{x}_{k}(t-1)) $ and $ \bar{x}_{av,k}(t) = \frac{1}{L} \sum_{j=1}^{L}\bar{x}_{jk}(t)$. The global state estimation error can be shown as 
	\begin{equation*}
	\begin{aligned}
	\|{Z}_i(t)- {X}(t)\| =&  \sqrt{\sum_{k =1}^{L} \|x_k(t) - \bar{x}_{ik}(t)\|^2_2 }\\
	\leq&  \sum_{k =1}^{L} {\|x_k(t) - \bar{x}_{ik}(t)\| }\\
	\leq&  {  \sum_{k =1}^{L}  \|x_{k}(t) - \bar{x}_{av,k}(t)\| } + {  \sum_{k =1}^{L}\|\bar{x}_{av,k}(t)- \bar{x}_{ik}(t)\| },\\
	\end{aligned}
	\end{equation*}
	where the first term can be analyzed by bringing in the definition of  $\bar{x}_{av,k}$ and using the stability property of the controller. The second term is analyzed by formulating a perturbed consensus problem following the same line as in \cite[Lemma 3]{nedic2018distributed}:
	\begin{equation*}
	\begin{aligned}
	\overline{x}_{ik}(t) &=  \textstyle\sum_{j} w_{ij} \bar{x}_{jk}(t-1) + \epsilon_{ik}(t),\\
	\epsilon_{ik}(t) &=  \textstyle\sum_{j \in \mathcal{N}_k} w_{ij} ({x}_{k}(t)-{x}_{k}(t-1)).\\
	\end{aligned}
	\end{equation*}
	
	(b) Recall the $p$-th gradient descent step: 
	\begin{equation*}
	\begin{aligned}
	\textstyle \theta_{iq}(p+1) &= \theta_{iq}(p) - \alpha\phi_i(t)\cdot\big(\theta_{iq}(p)^{\top}\phi_i(t) - g_i(t) \big) , \\  
	\textstyle \hat{\theta}_{iq}(p+1) &= \hat{\theta}_{iq}(p) - \alpha\hat{\phi}_i(t)\cdot\big(\hat{\theta}_{iq}(p)^{\top}\hat{\phi}_i(t) - \hat{g}_i(t) \big).  \\  
	\end{aligned}
	\end{equation*}
	For convenience, define
	\begin{equation*}
	\begin{aligned}
	\Phi_i(t_q+N-\tau) &\triangleq I-\alpha\phi_i(t_q+N-\tau)\phi_i^{\top}(t_q+N-\tau),\\
	\Pi_i(N) &\triangleq \textstyle\prod_{\tau=1}^{N}\Phi_i(t_q+N-\tau),\\
	G_i(t_q+N-\tau)& \triangleq \alpha\phi_i(t_q+N-\tau) g_i(t_q+N-\tau).\\ 
	\end{aligned}
	\end{equation*}
	By using the deduction of $\theta_{iq}(p)$, we obtain the relationship between $\theta_{iq}=\theta_{iq}(N)$ and $\theta_{i(q-1)}=\theta_{iq}(0)$, as follows: 
	\begin{equation*}
	\centering
	\begin{aligned}
	\theta_{iq} &= 
	\Pi_i(N) \theta_{i(q-1)} + \sum_{\tau=2}^{N} \Pi_i(\tau-1)G_i(t_q+N-\tau) + G_i(t_q+N-1),\\
	\hat{\theta}_{iq} &= 
	\hat{\Pi}_i(N) \hat{\theta}_{i(q-1)} + \sum_{\tau=2}^{N}\hat{\Pi}_i(\tau-1) \hat{G}_i(t_q+N-\tau) +  \hat{G}_i(t_q+N-1).\\
	\end{aligned}
	\end{equation*}
	It follows that the estimation error of $\|\hat{\theta}_{iq}-\theta_{iq}\|$ can be obtained by analyzing $\|\Phi_i-\hat{\Phi}_i\|$, $\|\Pi_i-\hat{\Pi}_i\|$ and $\|G_i-\hat{G}_i\|$ using the result from Lemma \ref{lemma:estimationerror} (a). 
	
	The full proof of statements (a) and (b) is relegated to Appendices \ref{app:C} and \ref{app:D}, respectively.
\end{proof}

Based on Lemma \ref{lemma:estimationerror}, we are now able to characterize the convergence performance of the ST based  learning.

\begin{restatable}[Convergence of the ST-Q learning]{theorem}{theoremfinal}
	Suppose Assumption \ref{asu:controllable}-\ref{asu:lr} are satisfied and $K_{i1}$ is a stabilizing controller. Then, for any  $\varepsilon_K>0$, there exist $N<\infty$ and $q<\infty$, such that the ST-based Q-learning mechanism described in Algorithms~(\ref{alg:gst}) and (\ref{alg:qlearning}) generates a sequence of stabilizing controllers  $\{\hat{K}_{iq}\}_{q=1}^{\infty}$ that converge to the optimal controller, i.e.,  $\forall i \in [L]$:
	\begin{equation*}
	\| \hat{K}_{iq} -  {K}_i^{*}\| \leq \varepsilon_K.
	\end{equation*}
	\label{theorem:final}
\end{restatable}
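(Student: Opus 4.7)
The plan is to bound the target error $\|\hat{K}_{iq}-K_i^*\|$ via a triangle inequality that interpolates through a hypothetical full-observation trajectory at the same iteration index. Concretely, let $\{K_{iq}\}$ denote the controllers that would be produced by Algorithm~\ref{alg:qlearning} if the estimator $Z_i(t)$ were replaced by the true global state $X(t)$ throughout (so that $\{K_{iq}\}$ is exactly the sequence covered by Lemma~\ref{lemma:fullobservation}). Then
\begin{equation*}
\|\hat{K}_{iq}-K_i^*\|\leq\|\hat{K}_{iq}-K_{iq}\|+\|K_{iq}-K_i^*\|,
\end{equation*}
and the problem reduces to making each summand at most $\varepsilon_K/2$ for sufficiently large $q$ and $N$.

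For the second summand, Lemma~\ref{lemma:fullobservation} is directly applicable: under the stated assumptions it guarantees $N<\infty$ and $q_1<\infty$ after which $\|K_{iq}-K_i^*\|\leq\varepsilon_K/2$ for every $i\in[L]$. For the first summand, the plan is to push the parameter-estimation bound of Lemma~\ref{lemma:estimationerror}(b) through the policy-improvement map $\theta\mapsto K=-H_{22}^{-1}H_{21}$. Since the blocks of $H_{iq}$ are linear functions of $\theta_{iq}$ and matrix inversion is locally Lipschitz away from singularity, I would derive an inequality of the form $\|\hat{K}_{i(q+1)}-K_{i(q+1)}\|\leq C\|\hat{\theta}_{iq}-\theta_{iq}\|$, valid on a neighborhood where $H_{iq,22}$ and $\hat H_{iq,22}$ have inverses of uniformly bounded norm. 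Lemma~\ref{lemma:estimationerror}(b) then supplies $\xi=\varepsilon_K/(2C)$ and an appropriate $q_2$, after which the first summand is also at most $\varepsilon_K/2$. Taking $q\geq\max(q_1,q_2)+1$ and the maximum of the two choices of $N$ finishes the argument.

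The hard part will be ensuring that the Lipschitz constant $C$ is actually valid uniformly along the whole sequence of iterates, which boils down to the uniform invertibility of $H_{iq,22}$ and $\hat H_{iq,22}$, and one step earlier to the stabilizing property of each $\hat{K}_{iq}$. I would discharge this obligation by induction on $q$: starting from the stabilizing $K_{i1}=\hat{K}_{i1}$, the inductive hypothesis keeps $\hat{K}_{iq}$ close to the stabilizing full-observation controller $K_{iq}$, and since the set of stabilizing gains is open, $\hat{K}_{iq}$ remains stabilizing with uniformly controlled Riccati cost matrix $S_i$; this keeps $\hat H_{iq,22}^{-1}$ bounded and allows Lemma~\ref{lemma:estimationerror} to be invoked at step $q+1$ to continue the induction. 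Matching the quantifiers so that one fixed $N$, one fixed $C$, and one fixed Riccati bound suffice for every $q$ along the trajectory — rather than quantities that drift with $q$ — is the delicate bookkeeping I expect to be the crux of the proof.
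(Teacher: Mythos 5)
Your proposal follows essentially the same route as the paper's proof: a triangle inequality through the hypothetical full-observation sequence $\{K_{iq}\}$, with Lemma~\ref{lemma:fullobservation} controlling $\|K_{iq}-K_i^*\|$ and Lemma~\ref{lemma:estimationerror}(b) combined with a Lipschitz bound $\|\hat{K}_{iq}-K_{iq}\|\leq k_0\|\hat{\theta}_{i(q-1)}-\theta_{i(q-1)}\|$ (which the paper takes from \cite{bradtke1994adaptive} and verifies in Appendix~\ref{app:D} via the boundedness of $H_{i(q-1),22}^{-1}$) controlling the other term. Your closing remarks on the uniformity of the Lipschitz constant and the preservation of stabilizability along the iterates identify exactly the point the paper treats only implicitly, so the proposal is correct and, if anything, slightly more careful.
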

\begin{proof}
	
	By Lemma \ref{lemma:estimationerror}, there exist $N < \infty$ and $q<\infty$ such that,
	\begin{align*}
	\| \hat{\theta}_{i(q-1)} -   {\theta}_{i(q-1)} \| \leq \xi.
	\end{align*}
	
	Following \cite{bradtke1994adaptive}, there exists a constant $k_0 > 0$, such that,
	\begin{align*}
	\| \hat{K}_{i(q)} -   {K}_{i(q)} \| \leq k_0  \|\hat{\theta}_{i(q-1)} - {\theta}_{i(q-1)}\|.
	\end{align*}
	
	Hence, we obtain that
	\begin{align*}
	\| \hat{K}_{iq} -   {K}_{iq} \| \leq    k_0\xi.
	\end{align*}
	
	Besides, from Lemma \ref{lemma:fullobservation}, there exists $q < \infty$, such that
	\begin{equation*}
	\| {K}_{iq} -   {K}_{i}^{*} \| \leq \xi_k .
	\end{equation*}
	
	By using the triangle inequality, we obtain that
	\begin{align*}
	\centering
	\| \hat{K}_{iq} -  {K}_{i}^{*} \| =  \| \hat{K}_{iq} - {K}_{iq} +{K}_{iq} -  {K}_{i}^{*} \| &\leq k_0\xi+\xi_k \triangleq \varepsilon_K.    
	\end{align*}
\end{proof}

\begin{figure}[t]
	\centering
	\begin{subfigure}[b]{0.42\columnwidth}
		\centering
		\caption{Global controllers.}
		\includegraphics[width=\textwidth]{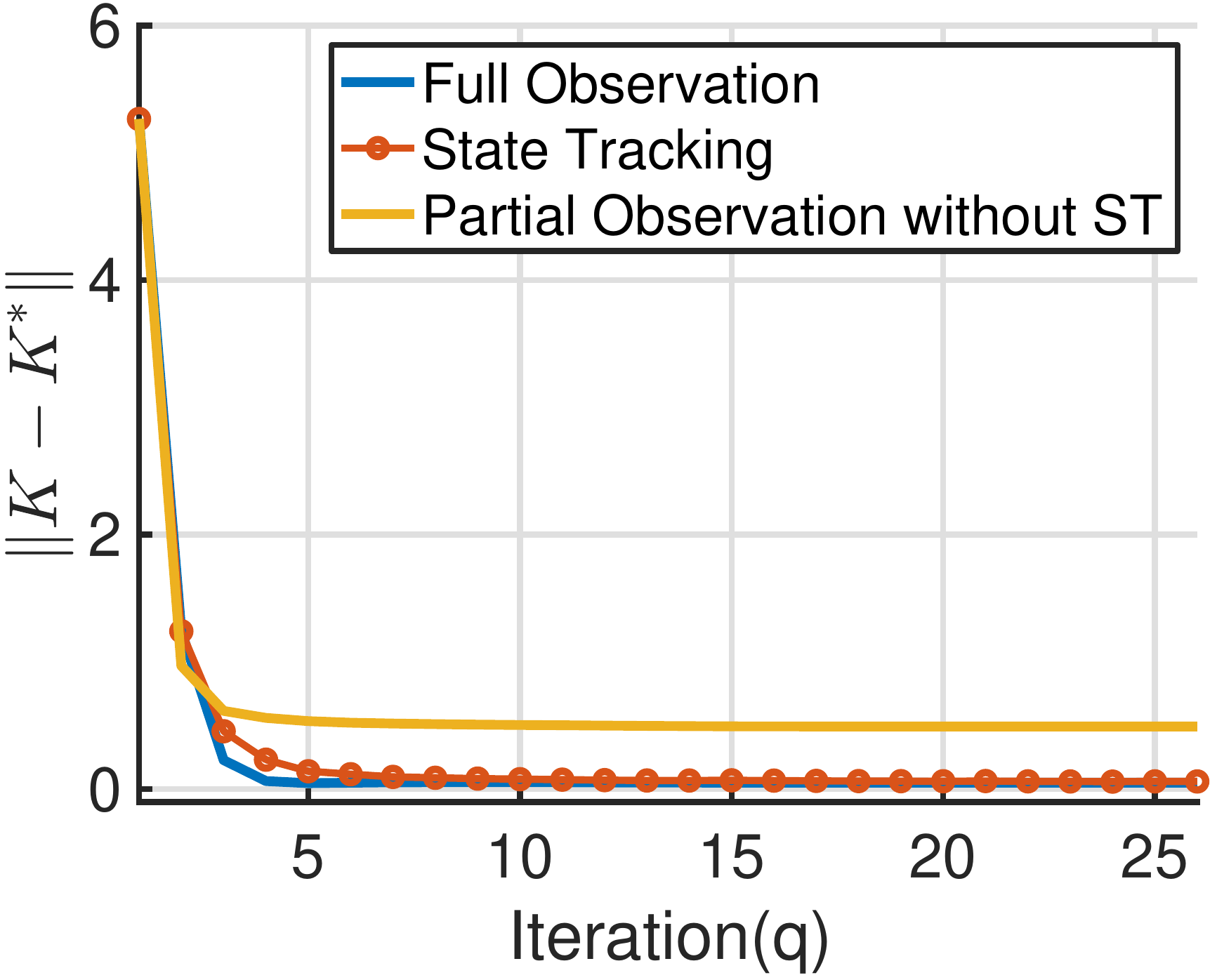}
		\label{fig:compareconvergence}
	\end{subfigure}
	\hfill
	\begin{subfigure}[b]{0.42\columnwidth}
		\centering
		\caption{Local controllers.}
		\includegraphics[width=\textwidth]{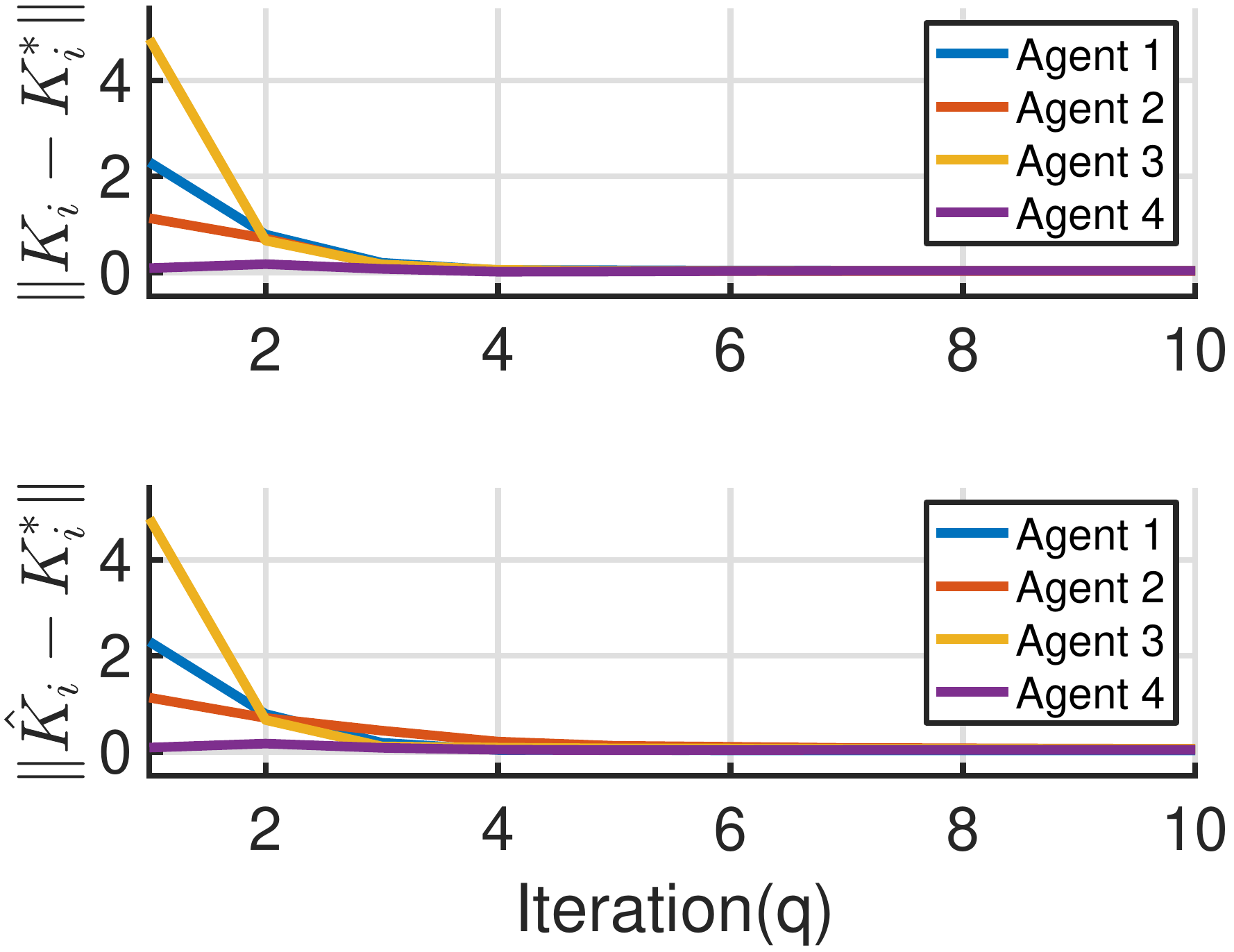}
		\label{fig:rlscompare}
	\end{subfigure}
	\hfill
	\caption{Convergence comparisons among three cases: ST based Q-learning, partial observation with no ST and full observation Q-learning. The controller obtained in three cases are denoted as $\hat{K}$, ${K}_{\textrm{partial}}$ and  ${K}$, respectively. The optimal controller is denoted as  $K^{*}$. The upper figure in Figure~(\ref{fig:rlscompare}) is the local controllers convergence behavior with full observation while the lower one is the result with ST strategy.}
	\label{fig:convergence}
\end{figure}
\begin{figure}[t]
	\centering
	\begin{subfigure}[b]{0.42\columnwidth}
		\centering
		\caption{SGD with large step size.}
		\includegraphics[width=\textwidth]{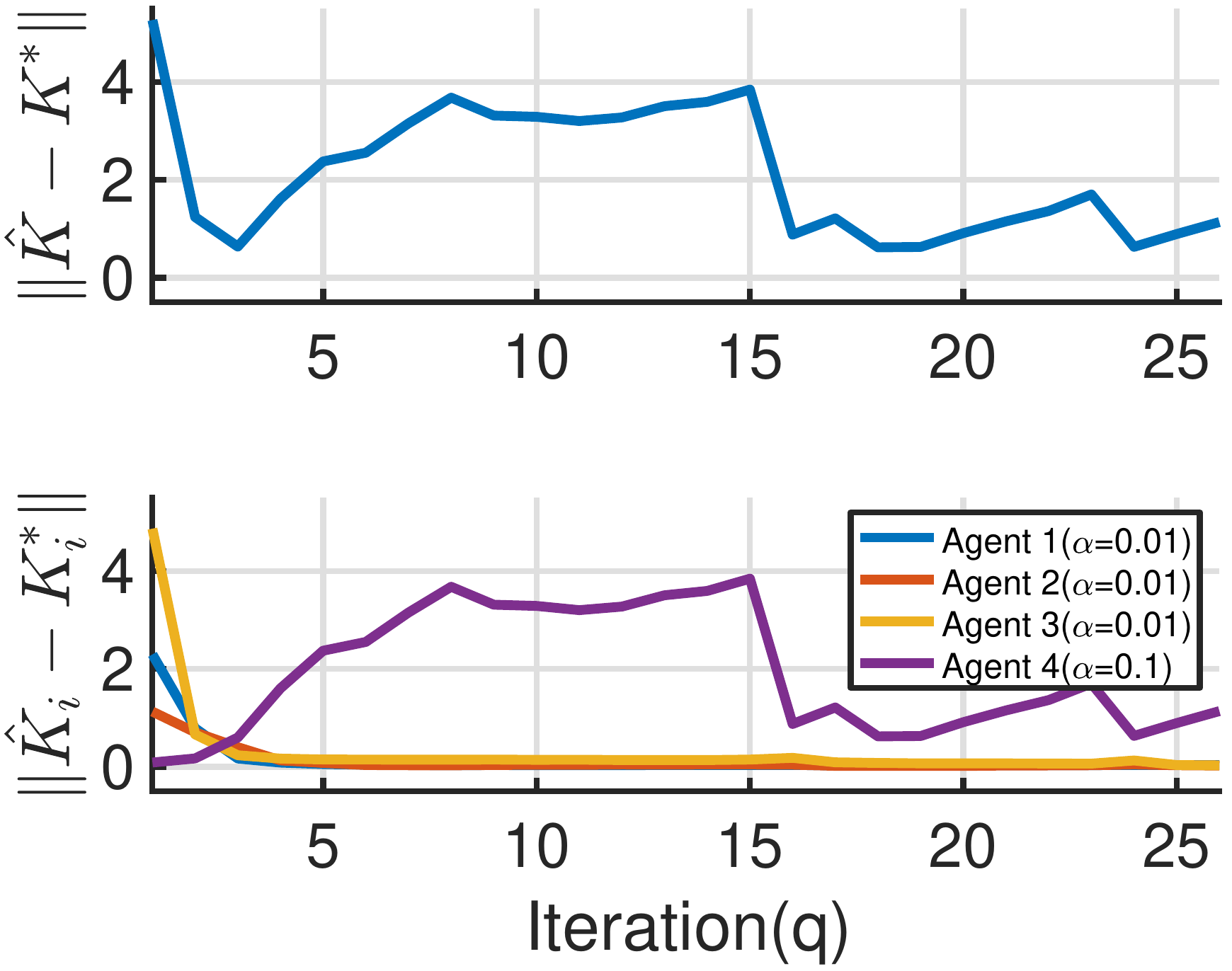}
		\label{fig:largeA}
	\end{subfigure}
\hfill
	\begin{subfigure}[b]{0.42\columnwidth}
		\centering
		\caption{SGD with small step size.}
		\includegraphics[width=\textwidth]{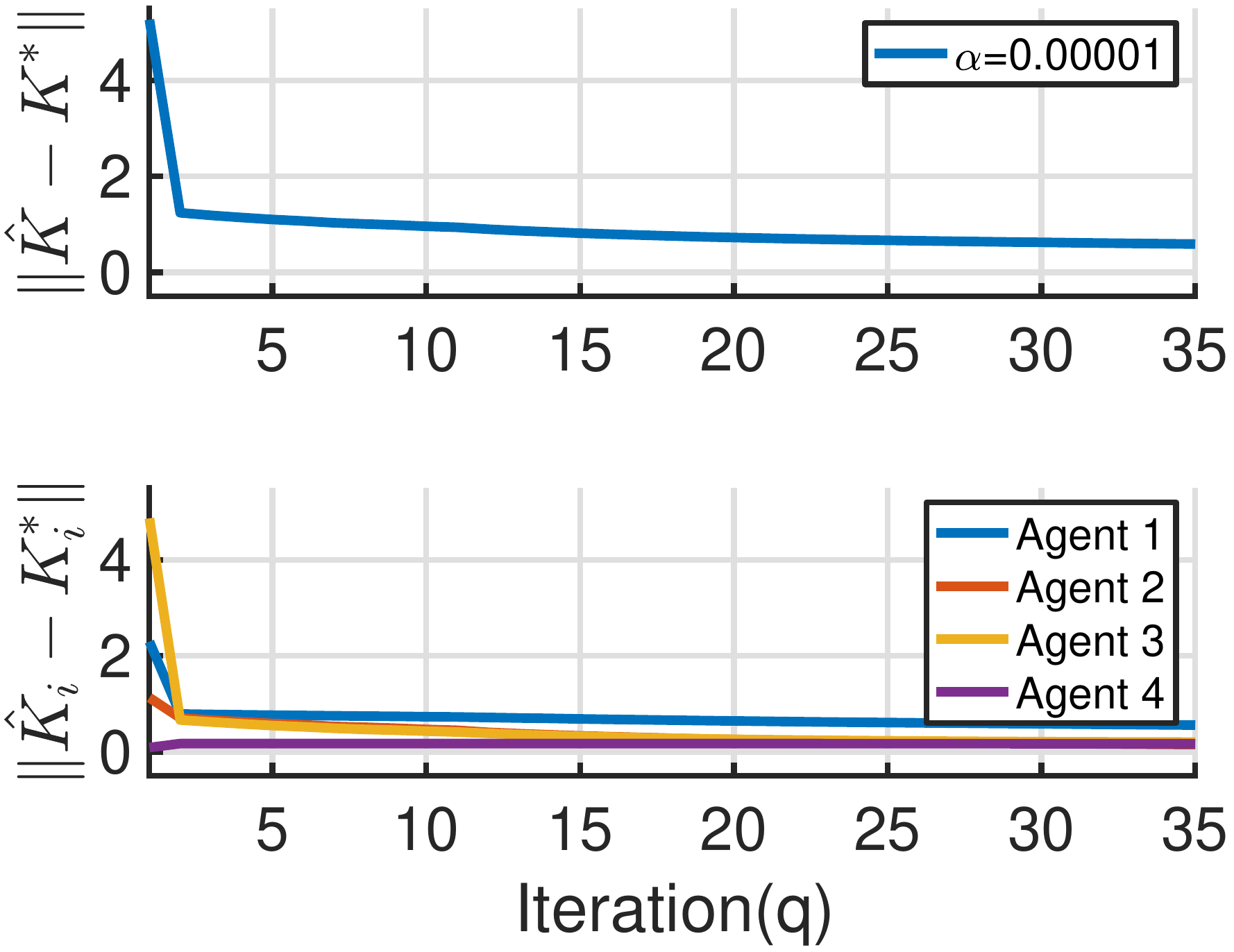}
		\label{fig:smallA}
	\end{subfigure}  
	\caption{Comparison of different SGD step size in the ST based Q-learning. The upper figures in Figures~(\ref{fig:largeA}) and (\ref{fig:smallA}) show the convergence behavior of the global controller while the lower figures are of the local controllers.}
	\label{fig:stepsize}
\end{figure}
\section{Experiments}
In this section, we first introduce the experimental setup, and then evaluate the performance of the proposed ST-Q learning method. In particular, 
we compare our approach with two baselines:  (i) distributed Q-learning with global state (DQG), and (ii) distributed Q-learning with partial observation of the global state (DQP) where the absent state information is set as 0, i.e., $ x_j(t) = 0, \forall j\notin \mathcal{N}_i^c$. We further examine the impact of the hyper-parameters, including the step size $\alpha$, interval $N$ and the excitation noise $\eta$, on the performance of our approach to verify the assumptions made in Section \ref{section:convergence}.   

\subsection{Experimental Setup}
We consider the following global system with four agents, and the communication topology and interconnection topology as demonstrated in Fig. \ref{fig:twograph}:	
\begin{equation*}
\begin{bmatrix}x_1(t+1) \\ x_2(t+1) \\ x_3(t+1) \\ x_4(t+1)\end{bmatrix} = A\begin{bmatrix}x_1(t) \\ x_2(t) \\ x_3(t) \\ x_4(t)\end{bmatrix} + B \begin{bmatrix}u_1(t) \\ u_2(t) \\ u_3(t) \\ u_4(t)\end{bmatrix},
\end{equation*}
where the initial state for each agent is given as follows
\begin{equation*}
x_i(0) = 0.01,~i=1,2,3,4.
\end{equation*}
Note that the parameters can be chosen arbitrarily as long as they meet the assumptions in Section \ref{section:convergence}. The system parameters $A$ and $B$ are stabilizable and are set as
\begin{equation*}
\centering
A = \left[\begin{array}{cccc}0.2 & 0.4 & 0.1 & 0.01 \\ 0.4 & 0.2 & 0.3 & 0.1 \\ 0.1 & 0.3 & 0.3 & 0.4\\0.2 & 0.1 & 0.5 & 0.3\\\end{array}\right],~B  = \left[\begin{array}{cccc}1 & 0 & 0 & 0\\ 0 & 1 & 0 & 0 \\  0 & 0 & 1 & 0\\ 0 & 0 & 0 & 1 \end{array}\right].
\end{equation*}
The weighting matrices for the LQR problem are selected as $P_i=R_i=1$, for $i=1,2,3,4$. By solving the discrete Riccati equation, the optimal controller is obtained as,
\begin{equation*}
\centering
K^{*} = \begin{bmatrix}
0.1223 &0.2279&0.0779&0.0251\\
0.2267 &0.1279&0.1823&0.0714\\
0.0796 &0.1869&0.1944&0.2341\\
0.1212 &0.0742&0.2838&0.1756\\
\end{bmatrix}.
\end{equation*}
Initial stable controller for Algorithms~(\ref{alg:gst}) and (\ref{alg:qlearning}) is chosen to be:
\begin{equation*}
\centering
K_{1} = \begin{bmatrix}
1 &1&0.0004&2\\
1 &0.2&1&0.1\\
4 &0.1&1&3\\
0.2 &0.1&0.3&0.2\\
\end{bmatrix}.
\end{equation*}
The weight matrix for the communication graph $\mathcal{N}_i^c$ is:
\begin{equation*}
\centering
W = \begin{bmatrix}
0.5 &0.5&0&0\\
0.5 &0.3&0.2&0\\
0 &0.2&0.2&0.6\\
0 &0&0.6&0.4\\
\end{bmatrix}.
\end{equation*}
The experiments are carried out with $N=1000$ and step size $\alpha = 0.01$. Follow the approach in \cite{goodwin2014adaptive}, the excitation noise $\eta_i(t)$ for Agent $i$ is designed as $\big(b_i\cdot \textrm{rand}(-1,1)+a_i \cdot \sum_{\omega=1}^{15}\sin(\omega t)^3\cos(\omega t)\big)\cdot \upsilon(p)$, where the decaying factor $\upsilon(p)=0.9999^p$ and $a_i,b_i \geq 0$ are constants. 

\subsection{Convergence of the ST-Q learning}

We first characterize the convergence performance of the proposed ST-Q learning approach. As shown in Fig. \ref{fig:compareconvergence}, the controller obtained by the ST-Q learning approach eventually converges to the optimal controller obtained by DQG, which clearly outperforms DQP. Note that all three approaches converge quickly. Moreover, Fig. \ref{fig:rlscompare} further demonstrates the convergence performance of the local controller $\hat{K}_i$ at each agent $i$ compared with DQG, i.e., each agent in the ST-Q learning almost has the same convergence behaviour as in DQG. We also evaluate the gap between the controller $\hat{K}$ obtained by the ST-Q learning and the controller $K$ obtained by DQG in the policy iteration, compared with that for the controller $K_{\textrm{partial}}$ obtained by DQP. It can be seen from  Fig.~\ref{fig:differentN} that the gap  $\|\hat{K}-K\|$ quickly converges to $0$, while there exists a significant gap between $K_{\textrm{partial}}$ and $K$ (dashed line). These results together indicate that the proposed ST-Q learning approach can achieve comparable performance with DQG, corroborating the benefits by using state tracking to facilitate an accurate global state estimation in distributed Q-learning.
\begin{figure}
	\begin{subfigure}[b]{0.42\columnwidth}
		\centering
		\caption{Decaying excitation noise.}
		\includegraphics[width=\textwidth]{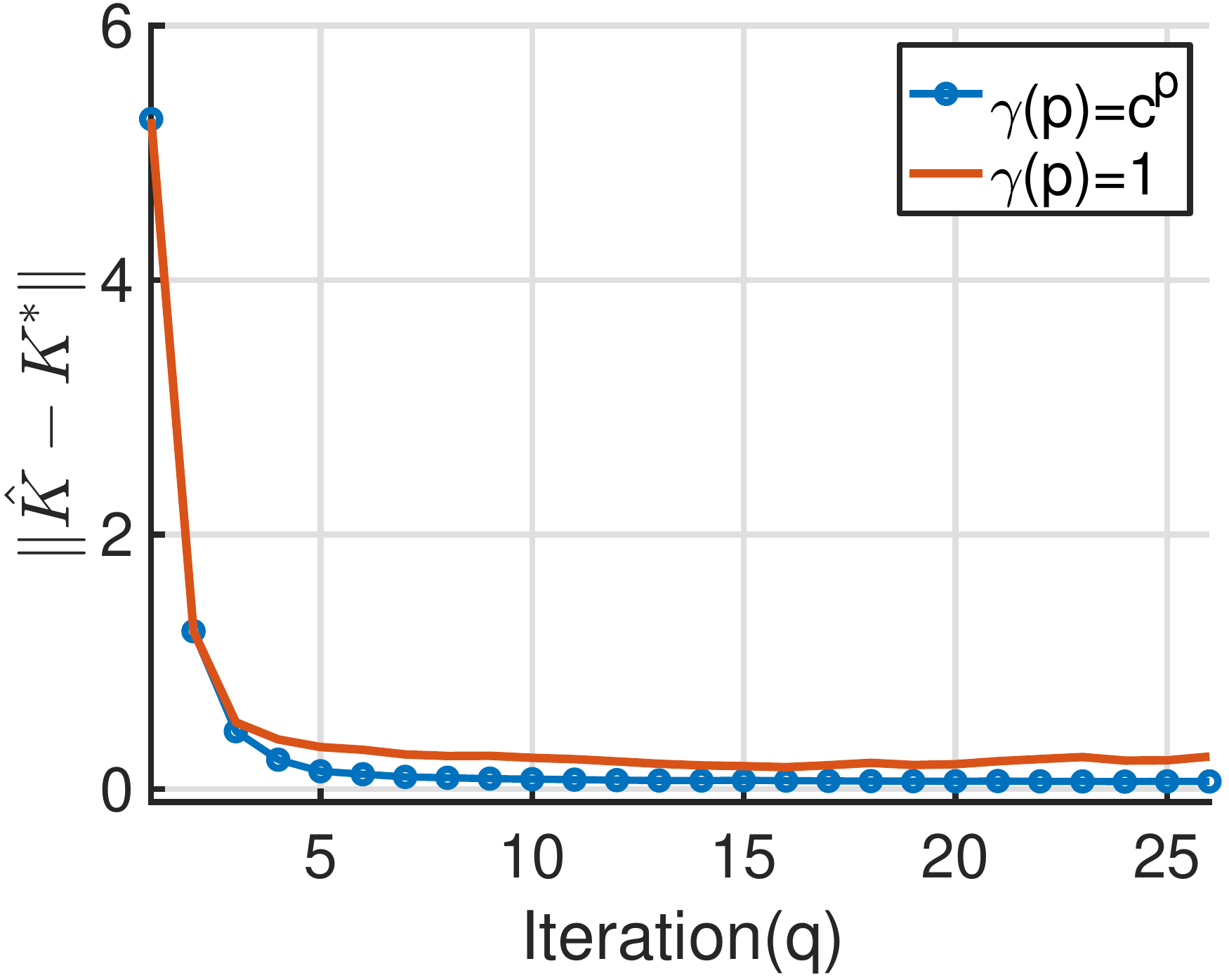}
		\label{fig:noisecompare1}
	\end{subfigure}     
	\hfill
	\begin{subfigure}[b]{0.42\columnwidth}
		\centering
		\caption{Convergence of $\hat{K}-K$.}
		\includegraphics[width=\textwidth]{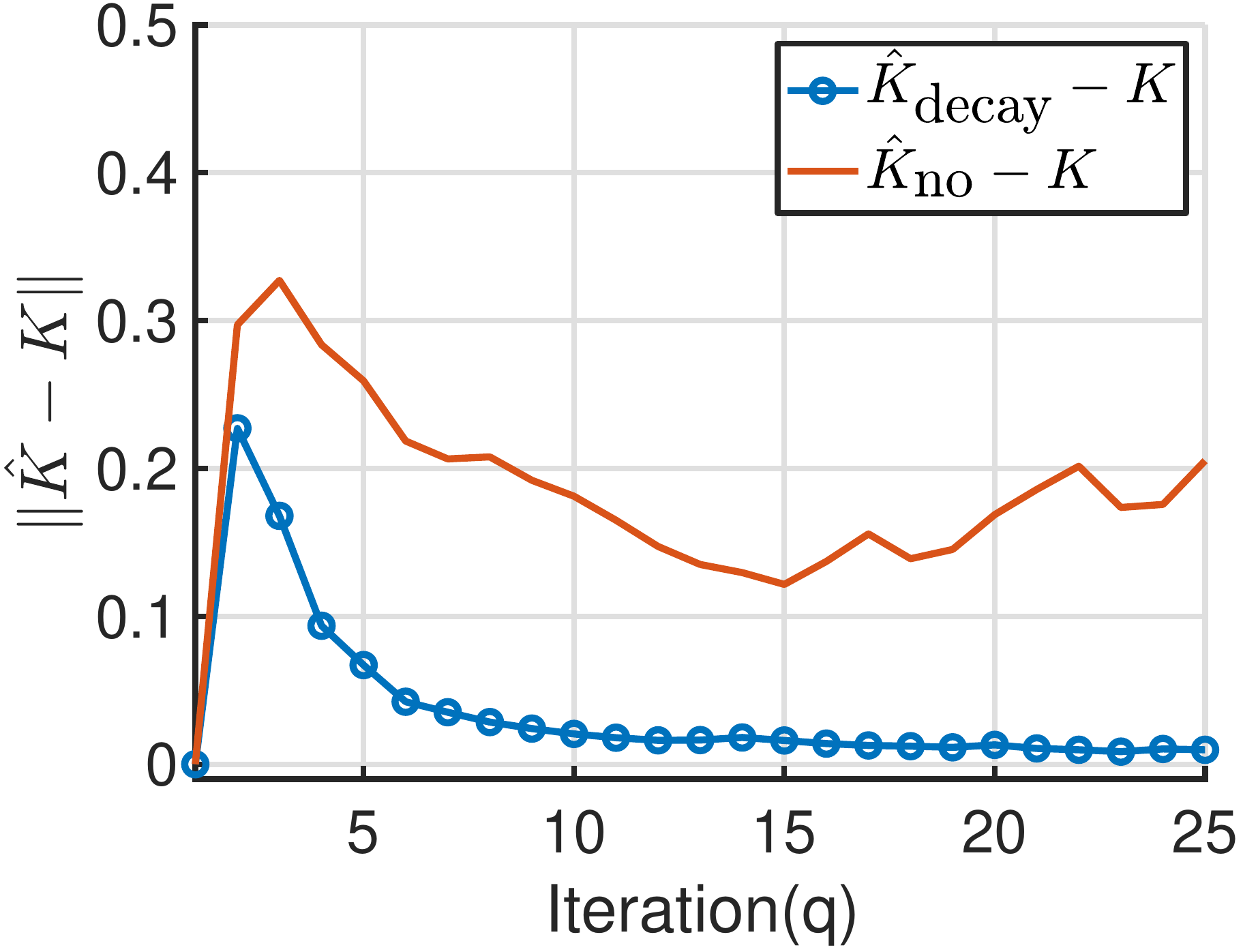}
		\label{fig:noisecompare2}
	\end{subfigure}    
	\caption{Comparison between two types of excitation noise in the ST based  Q-learning. $\hat{K}_{\textrm{decay}}$ is the controller obtained with decaying factor and $\hat{K}_{\textrm{no}}$ is with no decaying factor.}
\end{figure}
\begin{figure}
	\centering
	\begin{subfigure}[b]{0.42\columnwidth}
		\centering
		\caption{ST-Q with $N=50$.}
		\includegraphics[width=\textwidth]{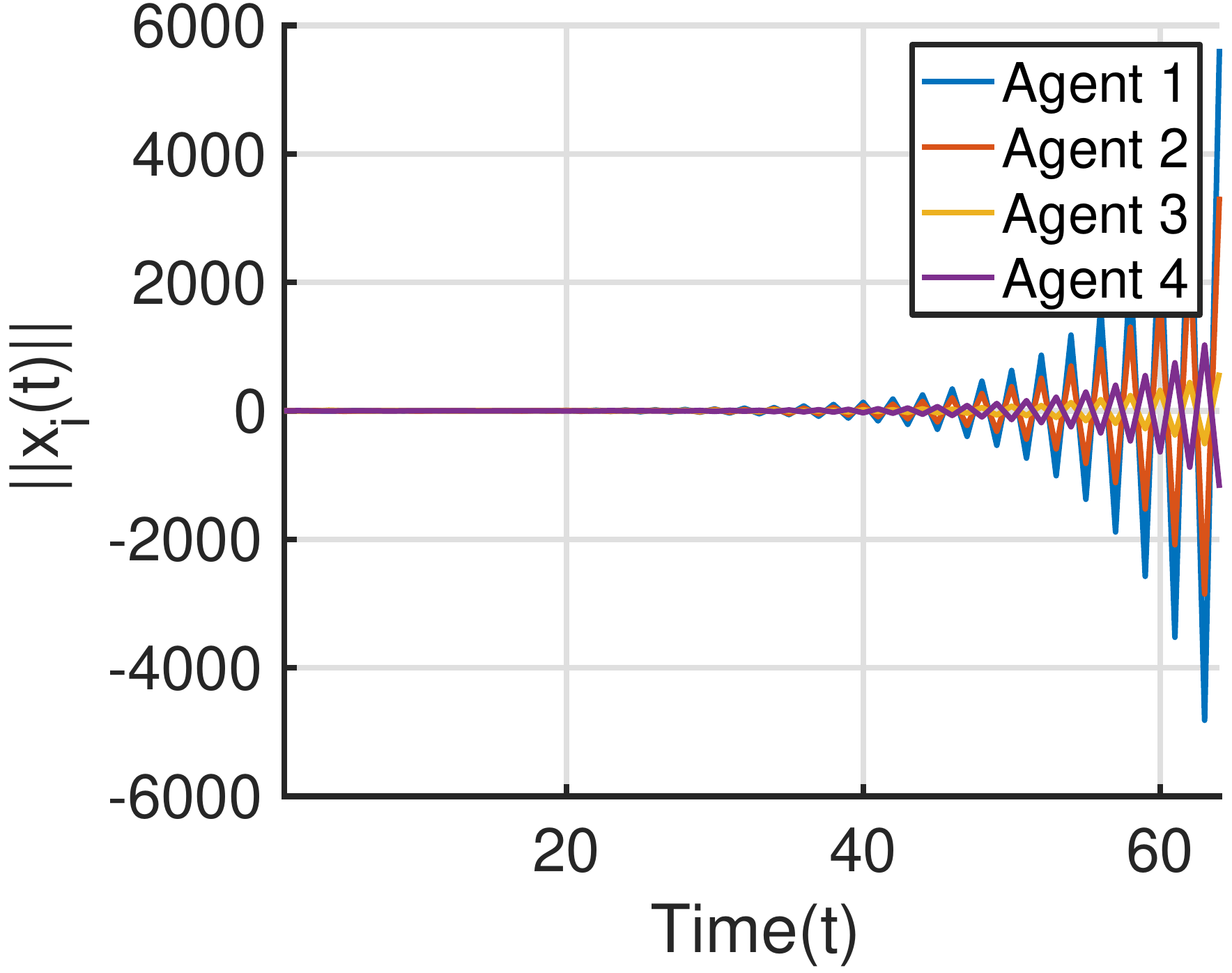}
		\label{fig:smallN}
	\end{subfigure}  
	\hfill
	\begin{subfigure}[b]{0.42\columnwidth}
		\centering
		\caption{ST-Q with different $N$.}
		\includegraphics[width=\textwidth]{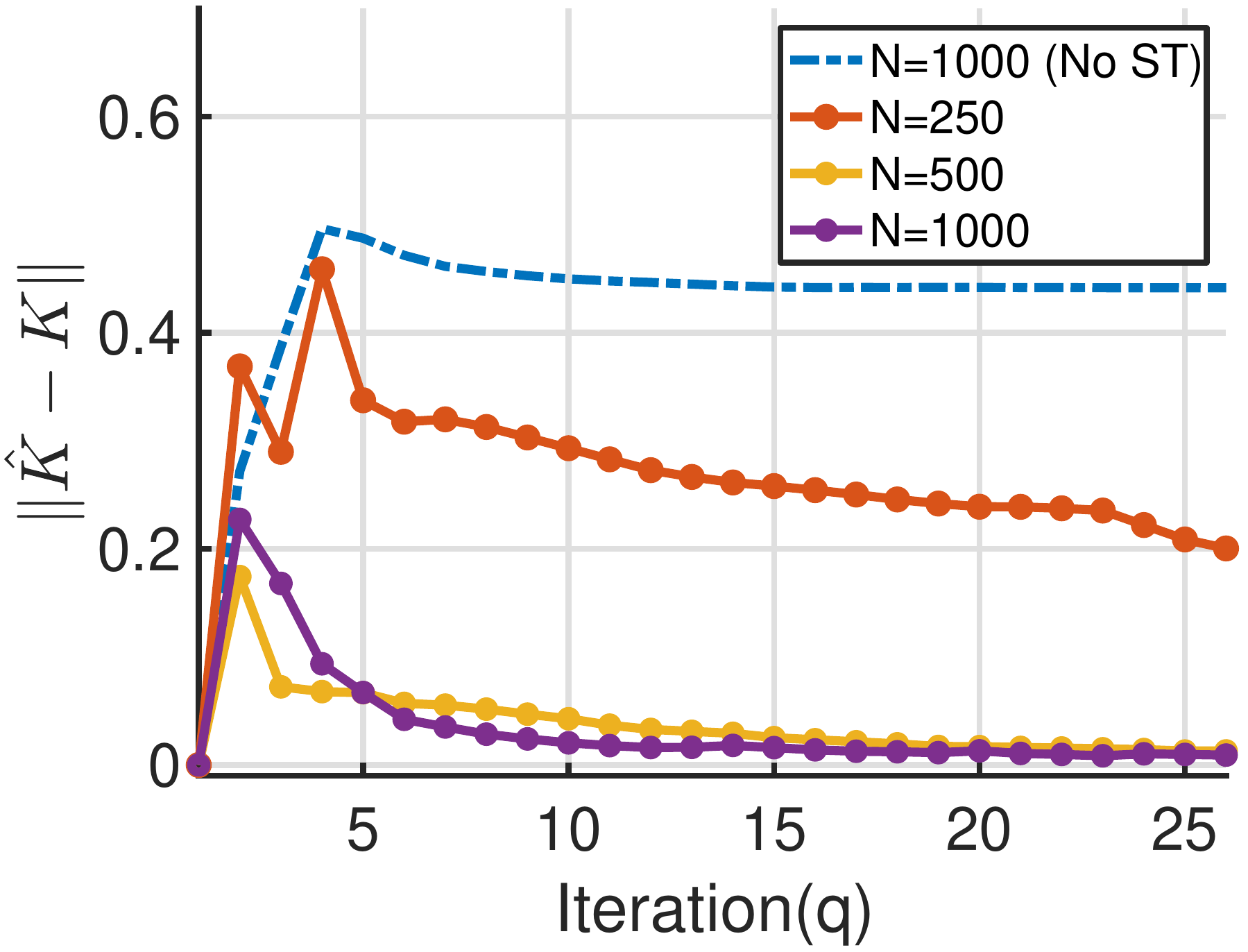}
		\label{fig:differentN}
	\end{subfigure}  
	\caption{Comparisons of different $N$ in the ST-Q learning algorithms.}
	\label{fig:N}
\end{figure}
\subsection{Impact of Hyper-Parameters}
We first evaluate the impact of the step size $\alpha$ on the convergence of the controller $\hat{K}_i$ obtained by the ST-Q learning. In contrast to the step size $0.01$ used in Fig \ref{fig:convergence}, the divergence of the local controller obtained by the ST-Q learning may occur with  a  larger step size (Fig. \ref{fig:largeA}), while a smaller step size may result in slower convergence rate (Fig. \ref{fig:smallA}).  

To examine the impact of the excitation noise, we compare the controller convergence performance under two different cases: (i) the noise is decaying as in Assumption \ref{asu:decaynoise}, and (ii) the noise is not decaying. As demonstrated in Fig. \ref{fig:noisecompare1} and Fig. \ref{fig:noisecompare2}, the controller $\hat{K}$ obtained by the ST-Q learning may not converge when the excitation noise is not decaying, verifying the necessity of Assumption \ref{asu:decaynoise} to guarantee the convergence of the proposed ST-Q learning approach.

Clearly, the performance of the ST-Q learning depends on the estimation accuracy of $\hat{\theta}_{iq}$ in the policy evaluation step, which is directly affected by the value of $N$. Intuitively, as $N$ increases, the estimation accuracy of $\hat{\theta}_{iq}$ improves accordingly, leading to a better performance of the ST-Q learning. Fig. \ref{fig:N} illustrates the impact of $N$ on the  performance of the ST-Q learning. As expected, when $N$ is not large enough, the estimated controller may destabilize the system as shown in Fig. \ref{fig:smallN} due to the lack of  adequate samples needed for achieving a better $\hat{\theta}_{iq}$. And Fig. \ref{fig:differentN}  indicates that the larger $N$ is, the better the performance of the ST-Q learning is. When $N$ is large enough, the statement in Lemma \ref{lemma:estimationerror} where the difference of the estimate $\hat{\theta}_{iq}$ obtained by the ST-based method and the estimate $\theta_{iq}$ obtained by the full observation method is decreasing along with the policy update, is verified in  Fig.~\ref{fig:differentN}.

\section{Conclusions and Future Work}
This work investigates a distributed multi-agent LQR control setup in a networked environment, in which the system dynamics, including the dynamics coupling graph is unknown. Each agent makes individual decisions based on its local observation and messages passed by its neighbors over the communication graph. Within this setting, we propose a multi-agent State Tracking based Q-learning method. Further, the asymptotic analysis on the convergence of the proposed algorithm is provided under mild assumptions. Empirically, in evaluation on an interconnected  system, we demonstrate that the proposed ST-Q learning method outperforms the classic Q-learning with only the partial observation and yields the same optimal controller as the full observation setting. In future work, we shall consider more complicated  communication settings, e.g., (i) communication delay in the network; (ii) time-varying graph. Moreover, it is also of interest to quantify the sampling complexity and the convergence rate of the proposed algorithm.

\bibliographystyle{unsrt}  
\bibliography{references}  

\begin{thebibliography}{10}

\bibitem{de2006decentralized}
Maria~Carmela De~Gennaro and Ali Jadbabaie.
\newblock Decentralized control of connectivity for multi-agent systems.
\newblock In {\em Proceedings of the 45th IEEE Conference on Decision and
  Control}, pages 3628--3633. IEEE, 2006.

\bibitem{ding2019survey}
Derui Ding, Qing-Long Han, Zidong Wang, and Xiaohua Ge.
\newblock A survey on model-based distributed control and filtering for
  industrial cyber-physical systems.
\newblock {\em IEEE Transactions on Industrial Informatics}, 15(5):2483--2499,
  2019.

\bibitem{massioni2009distributed}
Paolo Massioni and Michel Verhaegen.
\newblock Distributed control for identical dynamically coupled systems: A
  decomposition approach.
\newblock {\em IEEE Transactions on Automatic Control}, 54(1):124--135, 2009.

\bibitem{antonelli2013interconnected}
Gianluca Antonelli.
\newblock Interconnected dynamic systems: An overview on distributed control.
\newblock {\em IEEE Control Systems Magazine}, 33(1):76--88, 2013.

\bibitem{conte2016distributed}
Christian Conte, Colin~N Jones, Manfred Morari, and Melanie~N Zeilinger.
\newblock Distributed synthesis and stability of cooperative distributed model
  predictive control for linear systems.
\newblock {\em Automatica}, 69:117--125, 2016.

\bibitem{watkins1992q}
Christopher~JCH Watkins and Peter Dayan.
\newblock Q-learning.
\newblock {\em Machine learning}, 8(3-4):279--292, 1992.

\bibitem{bertsekas1995dynamic}
Dimitri~P Bertsekas.
\newblock {\em Dynamic programming and optimal control}, volume~1.
\newblock Athena scientific Belmont, MA, 1995.

\bibitem{bradtke1994adaptive}
Steven~J Bradtke, B~Erik Ydstie, and Andrew~G Barto.
\newblock Adaptive linear quadratic control using policy iteration.
\newblock In {\em Proceedings of 1994 American Control Conference-ACC'94},
  volume~3, pages 3475--3479. IEEE, 1994.

\bibitem{narayanan2016distributed}
Vignesh Narayanan and Sarangapani Jagannathan.
\newblock Distributed adaptive optimal regulation of uncertain large-scale
  interconnected systems using hybrid q-learning approach.
\newblock {\em IET Control Theory \& Applications}, 10(12):1448--1457, 2016.

\bibitem{dizche2019sparse}
Amirhassan~Fallah Dizche, Aranya Chakrabortty, and Alexandra Duel-Hallen.
\newblock Sparse wide-area control of power systems using data-driven
  reinforcement learning.
\newblock In {\em 2019 American Control Conference (ACC)}, pages 2867--2872.
  IEEE, 2019.

\bibitem{bidram2014distributed}
Ali Bidram, Frank~L Lewis, and Ali Davoudi.
\newblock Distributed control systems for small-scale power networks: Using
  multiagent cooperative control theory.
\newblock {\em IEEE Control Systems Magazine}, 34(6):56--77, 2014.

\bibitem{borrelli2008distributed}
Francesco Borrelli and Tam{\'a}s Keviczky.
\newblock Distributed lqr design for identical dynamically decoupled systems.
\newblock {\em IEEE Transactions on Automatic Control}, 53(8):1901--1912, 2008.

\bibitem{vlahakis2019distributed}
Eleftherios~E Vlahakis, Leonidas~D Dritsas, and George~D Halikias.
\newblock Distributed lqr design for identical dynamically coupled systems:
  Application to load frequency control of multi-area power grid.
\newblock In {\em 2019 IEEE 58th Conference on Decision and Control (CDC)},
  pages 4471--4476. IEEE, 2019.

\bibitem{dong2010distributed}
Wenjie Dong.
\newblock Distributed optimal control of multiple systems.
\newblock {\em International Journal of Control}, 83(10):2067--2079, 2010.

\bibitem{gorges2019distributed}
Daniel G{\"o}rges.
\newblock Distributed adaptive linear quadratic control using distributed
  reinforcement learning.
\newblock {\em IFAC-PapersOnLine}, 52(11):218--223, 2019.

\bibitem{cheng2016gain}
Yi~Cheng and V~Ugrinovskii.
\newblock Gain-scheduled leader-follower tracking control for interconnected
  parameter varying systems.
\newblock {\em International Journal of Robust and Nonlinear Control},
  26(3):461--488, 2016.

\bibitem{bernstein2002complexity}
Daniel~S Bernstein, Robert Givan, Neil Immerman, and Shlomo Zilberstein.
\newblock The complexity of decentralized control of markov decision processes.
\newblock {\em Mathematics of operations research}, 27(4):819--840, 2002.

\bibitem{zhang2018fully}
Kaiqing Zhang, Zhuoran Yang, Han Liu, Tong Zhang, and Tamer Ba{\c{s}}ar.
\newblock Fully decentralized multi-agent reinforcement learning with networked
  agents.
\newblock {\em arXiv preprint arXiv:1802.08757}, 2018.

\bibitem{zhang2019distributed}
Yan Zhang and Michael~M Zavlanos.
\newblock Distributed off-policy actor-critic reinforcement learning with
  policy consensus.
\newblock In {\em 2019 IEEE 58th Conference on Decision and Control (CDC)},
  pages 4674--4679. IEEE, 2019.

\bibitem{li2019distributed}
Yingying Li, Yujie Tang, Runyu Zhang, and Na~Li.
\newblock Distributed reinforcement learning for decentralized linear quadratic
  control: A derivative-free policy optimization approach.
\newblock {\em arXiv preprint arXiv:1912.09135}, 2019.

\bibitem{carmon2018accelerated}
Yair Carmon, John~C Duchi, Oliver Hinder, and Aaron Sidford.
\newblock Accelerated methods for nonconvex optimization.
\newblock {\em SIAM Journal on Optimization}, 28(2):1751--1772, 2018.

\bibitem{alemzadeh2019distributed}
Siavash Alemzadeh and Mehran Mesbahi.
\newblock Distributed q-learning for dynamically decoupled systems.
\newblock In {\em 2019 American Control Conference (ACC)}, pages 772--777.
  IEEE, 2019.

\bibitem{lewis2009reinforcement}
Frank~L Lewis and Draguna Vrabie.
\newblock Reinforcement learning and adaptive dynamic programming for feedback
  control.
\newblock {\em IEEE circuits and systems magazine}, 9(3):32--50, 2009.

\bibitem{willems1971least}
Jan Willems.
\newblock Least squares stationary optimal control and the algebraic riccati
  equation.
\newblock {\em IEEE Transactions on Automatic Control}, 16(6):621--634, 1971.

\bibitem{nedic2018distributed}
Angelia Nedi{\'c} and Ji~Liu.
\newblock Distributed optimization for control.
\newblock {\em Annual Review of Control, Robotics, and Autonomous Systems},
  1:77--103, 2018.

\bibitem{doan2019finite}
Thinh~T Doan, Siva~Theja Maguluri, and Justin Romberg.
\newblock Finite-time analysis of distributed td (0) with linear function
  approximation for multi-agent reinforcement learning.
\newblock {\em arXiv preprint arXiv:1902.07393}, 2019.

\bibitem{goodwin2014adaptive}
Graham~C Goodwin and Kwai~Sang Sin.
\newblock {\em Adaptive filtering prediction and control}.
\newblock Courier Corporation, 2014.

\end{thebibliography}

\newpage
\appendix
{\centering{\Large\bf{Appendix}}}
\asucontrollable*
\asuconnect*
\asuweightmatrix*
\asudecaynoise*
\asulr*

\section{Quadratic Structure of the Q-function}\label{app:qfunction}
Substituting the system dynamics \eqref{eqn:subsystem} into the definition of the Q-factor \eqref{eqn:Qfunction}, we can obtain the quadratic structure of the Q-factor. Given a full observation of the global state, we have that
\begin{equation*}
\begin{aligned}
Q_i(x_i(t),u_i(t)) 
&= g_i(x_i(t),u_i(t)) + Q_i(x_i(t),u_i(t+1))\\
&= x_i(t)^{\top} {P_i} x_i(t) + u_i(t)^{\top}R_iu_i(t)+ x_i(t+1)^{\top} S_i x_i(t+1)\\
&=\left(\begin{array}{l}{X}(t) \\ {u_i}(t)\end{array}\right)^{\top}\left(\begin{array}{ccc} {\mathcal{A}_i}^{\top} {S_i \mathcal{A}_i}+\Tilde{P}_i &  {\mathcal{A}_i}^{\top} {S_i B_i}    \\  {B_i}^{\top} {S_i \mathcal{A}_i} &  {B_i}^{\top} {S_i B_i} + R_i \end{array}\right)\left(\begin{array}{l}{X}(t) \\ {u_i}(t)\end{array}\right)\\
& = \left(\begin{array}{l}{X}(t) \\ {u_i}(t)\end{array}\right)^{\top}\left(\begin{array}{ccc} H_{11,i}  &  H_{12,i}   \\ H_{21,i} & H_{22,i}  \end{array}\right)\left(\begin{array}{l}{X}(t) \\ {u_i}(t)\end{array}\right)\\
&= \left(\begin{array}{l}{X}(t) \\ {u_i}(t)\end{array}\right)^{\top}H_i\left(\begin{array}{l}{X}(t) \\ {u_i}(t)\end{array}\right)\\
& \triangleq {y_i}(t)^{\top} {\theta_i}, \\
\Tilde{P}_i &= \begin{bmatrix} 
0_{n \times n} & \dots & 0 \\
\vdots & P_i & \vdots \\
0 &    \dots    & 0 
\end{bmatrix} \in \mathbb{R}^{Ln \times Ln},
\end{aligned}
\end{equation*}
where $\Tilde{P}_i$ is a diagonal matrix with the  $(i,i)$-th block set to be $P_i$. ${y_i}(t)=[{x}_{1}^2(t),{x}_{1}(t){x}_{2}(t), \cdots, {x}_{L}(t)u_i(t),u_i^2(t)]$ is a vector consisting of  all  the quadratic basis over the elements in $[X(t);u_i(t)]$.
Since $H_i$ is symmetric, it suffices to use ${\theta_i} \in \mathbb{R}^{(Ln+m)(Ln+m+1)/2}$ to represent the unknown parameters, i.e., the elements of ${\theta_i}$ are the upper right triangle of $H$ in the correct order. Moreover, ${\mathcal{A}_i}$ is a row vector which stacks subvectors $\{A_{ij}\}_{j\in[L]}$.

\section{Proof of Lemma 1}\label{app:A}
\lemmafullobservation*
\begin{proof}
	The proof of this lemma includes two steps. First, we demonstrate that by replacing recursive least squares (RLS) with stochastic gradient descent (SGD) in the adaptive policy iteration algorithm proposed in \cite{bradtke1994adaptive}, the policy iteration algorithm also generates a sequence of stabilizing controls converging to the optimal in the single agent case. 
	
	Observe that in \cite{bradtke1994adaptive}, only the intermediate result Lemma 2 requires the property of the RLS estimation. Thus we need to prove that the SGD estimation also has the same property. Recall Lemma 2 in \cite{bradtke1994adaptive},
	\begin{lemma*}[Lemma 2  \cite{bradtke1994adaptive}]
		If $\phi_i(t)$ is persistently excited and $N > N_0$, then we have
		\begin{equation*}
		\begin{aligned}
		\|\hat{\theta}_{iq} -\theta_{iq}\|&\leq \varepsilon_{N}\big(\|\hat{\theta}_{i(q-1)} - \theta_{i(q-1)} \| + \|\theta_{i(q-1)}- \theta_{iq})\| \big),
		\end{aligned}
		\end{equation*}
		where $\lim_{N \to \infty}\varepsilon_{N}=0$.
	\end{lemma*}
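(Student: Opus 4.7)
The plan is to turn the SGD update into an exact affine recursion on the error $\hat{\theta}_{iq}(p) - \theta_{iq}$ and then bound the resulting matrix product using the persistent excitation condition. First I would use the fact that the Bellman equation \eqref{eqn:leastform} is an exact linear identity for the target parameter $\theta_{iq}$ associated with controller $K_{iq}$, namely $g_i(t) = \phi_i(t)^\top \theta_{iq}$ for every sample $t\in[t_q,t_q+N-1]$ collected during iteration $q$ under full-observation data. Substituting this into the SGD update \eqref{eqn:gradient} and subtracting $\theta_{iq}$ from both sides yields the clean linear recursion
\begin{equation*}
\hat{\theta}_{iq}(p+1) - \theta_{iq} = \bigl(I - \alpha \phi_i(t)\phi_i(t)^\top\bigr)\bigl(\hat{\theta}_{iq}(p) - \theta_{iq}\bigr).
\end{equation*}

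Next I would iterate this for $p=0,\dots,N-1$ and use the warm-start convention $\hat{\theta}_{iq}(0)=\hat{\theta}_{i(q-1)}$ from Algorithm~(\ref{alg:qlearning}), obtaining $\hat{\theta}_{iq} - \theta_{iq} = \Pi_i(N)\bigl(\hat{\theta}_{i(q-1)} - \theta_{iq}\bigr)$ with $\Pi_i(N) \triangleq \prod_{\tau=1}^{N}\bigl(I-\alpha\phi_i(t_q+N-\tau)\phi_i^\top(t_q+N-\tau)\bigr)$ exactly as in the proof-sketch of Lemma~\ref{lemma:estimationerror}. Applying sub-multiplicativity of the operator norm and inserting $\pm\theta_{i(q-1)}$ inside the norm on the right,
\begin{equation*}
\|\hat{\theta}_{iq} - \theta_{iq}\| \le \|\Pi_i(N)\|\,\bigl(\|\hat{\theta}_{i(q-1)}-\theta_{i(q-1)}\| + \|\theta_{i(q-1)}-\theta_{iq}\|\bigr),
\end{equation*}
so it suffices to define $\varepsilon_N \triangleq \|\Pi_i(N)\|$ and prove $\varepsilon_N\to 0$.

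The heart of the argument, and what I expect to be the main obstacle, is bounding $\|\Pi_i(N)\|$ under Assumptions~\ref{asu:decaynoise} and~\ref{asu:lr}. Each factor $I-\alpha\phi\phi^\top$ is a rank-one contraction with operator norm at most $1$: the step-size bound $\alpha<1/M$ together with the PE upper bound $\|\phi_i(t)\|^2\le M$ ensures $0\le 1-\alpha\|\phi\|^2\le 1$, so the factor fixes the orthogonal complement of $\phi$ and shrinks by $1-\alpha\|\phi\|^2$ along $\phi$. For any unit vector $v$, writing $v_p = \prod_{\tau=1}^{p}(\cdots)v$ and expanding the square, I get the energy-decrement identity $\|v_{p+1}\|^2 = \|v_p\|^2 - \alpha(2-\alpha\|\phi_p\|^2)(\phi_p^\top v_p)^2 \le \|v_p\|^2 - \alpha(\phi_p^\top v_p)^2$. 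Telescoping gives $\|v_N\|^2 \le 1 - \alpha \sum_{p=0}^{N-1}(\phi_p^\top v_p)^2$, and the technical step is to use the PE lower bound $\sum_t \phi_i(t)\phi_i^\top(t)\succeq mI$ to transform the sum of $(\phi_p^\top v_p)^2$ into a quadratic form lower-bounded by a positive multiple of $\|v_0\|^2$. One standard route is to couple $v_p$ to $v_0$: since the increments $v_{p+1}-v_p$ are small when $\alpha\|\phi\|^2$ is small, one shows $\sum(\phi_p^\top v_p)^2 \ge \frac{m}{2}\|v_0\|^2$ for $N$ exceeding some $N_0$ depending on $(m,M,\alpha)$, giving a uniform contraction $\|v_N\|^2\le 1-\alpha m/2$; cascading over blocks of length $N_0$ then yields geometric decay $\varepsilon_N \le (1-\alpha m/2)^{\lfloor N/N_0\rfloor/2}\to 0$. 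Taking supremum over unit $v$ finishes $\varepsilon_N\to 0$ and completes the lemma.
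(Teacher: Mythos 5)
Your proposal follows essentially the same route as the paper's own argument: the exact error recursion $\hat{\theta}_{iq}(p+1)-\theta_{iq}=\bigl(I-\alpha\phi_i(t)\phi_i^{\top}(t)\bigr)\bigl(\hat{\theta}_{iq}(p)-\theta_{iq}\bigr)$, the warm start $\hat{\theta}_{iq}(0)=\hat{\theta}_{i(q-1)}$, the insertion of $\pm\theta_{i(q-1)}$, and the definition $\varepsilon_N=\|\Pi_i(N)\|$ are exactly what appears in Appendix B. The only difference is that you actually sketch why $\varepsilon_N\to 0$ via the energy-decrement/persistent-excitation argument, whereas the paper simply asserts this limit from Assumptions \ref{asu:decaynoise} and \ref{asu:lr}; your added detail is sound and, if anything, strengthens the write-up.
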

	This lemma still holds when replacing RLS with SGD. Consider the $q$-th policy iteration where $\theta_{iq}$ is the true parameter vector for the Q-factor with control policy $K_i$.  $\hat{\theta}_{iq}= \hat{\theta}_{iq}(N)$ is the estimate of ${\theta}_{iq}$ at the end of the $q$-th policy iteration. The initial estimate is the final value from the previous policy iteration,  i.e., $\hat{\theta}_{iq}(0)=\hat{\theta}_{i(q-1)}(N)$. Recall the SGD algorithm, 
	
	\begin{equation*}
	\begin{aligned}
	\textstyle \hat{\theta}_{iq}(N) &= \hat{\theta}_{iq}(N-1) - \alpha\phi_i(t_q+N)\cdot\Big(\hat{\theta}_{iq}(N-1)^{\top}\phi_i(t_q+N) - g_i(t_q+N) \Big),\\
	\hat{\theta}_{iq}(N) -\theta_{iq} &= \big(I - \alpha\phi_i(t_q+N)\phi_i^{\top}(t_q+N)\big)(\hat{\theta}_{iq}(N-1) - \theta_{iq}) - \underbrace{\alpha\phi_i(t_q+N)\cdot\Big(\phi_i^{\top}(t_q+N)\theta_{iq} - g_i(t_q+N) \Big)}_{=0} \\
	&= \big(I - \alpha\phi_i(t_q+N)\phi_i^{\top}(t_q+N)\big)(\hat{\theta}_{iq}(N-1) - \theta_{iq})\\
	& = \cdots\\
	&= \prod_{\tau=t_q+1}^{t_q+N}\big(I - \alpha\phi_i(\tau)\phi_i^{\top}(\tau)\big)(\hat{\theta}_{i(q-1)} - \theta_{iq})\\
	&= \prod_{\tau=t_q+1}^{t_q+N}\big(I - \alpha\phi_i(\tau)\phi_i^{\top}(\tau)\big)(\hat{\theta}_{i(q-1)} - \theta_{i(q-1)} + \theta_{i(q-1)}- \theta_{iq}),\\
	\| \hat{\theta}_{iq}(N) -\theta_{iq}\|&\leq \|\prod_{\tau=t_q+1}^{t_q+N}\big(I - \alpha\phi_i(\tau)\phi_i^{\top}(\tau)\big)\| \cdot \big(\|\hat{\theta}_{i(q-1)} - \theta_{i(q-1)} \| + \|\theta_{i(q-1)}- \theta_{iq})\| \big)\\
	&\triangleq \varepsilon_{N}\big(\|\hat{\theta}_{i(q-1)} - \theta_{i(q-1)} \| + \|\theta_{i(q-1)}- \theta_{iq})\| \big),
	\end{aligned}
	\end{equation*}
	where step size $\alpha$ satisfies Assumption \ref{asu:lr} and $\phi_i(t)$ satisfies Assumption \ref{asu:decaynoise}. Hence we obtain that
	\begin{equation*}
	\lim_{N \to \infty}\varepsilon_{N}=0.
	\end{equation*}
	
	Second, we demonstrate that Lemma \ref{lemma:fullobservation} holds when agents have full observation of the global state. Under this setting, \cite{narayanan2016distributed} considers a policy iteration algorithm with the RLS estimation method and further provides the convergence proof by utilizing the single agent result in \cite{bradtke1994adaptive}. By following the same approach in \cite{narayanan2016distributed} and the results  obtained in the preceding step, we are able to obtain the desired result in Lemma \ref{lemma:fullobservation}. See \cite[Theorem~1]{narayanan2016distributed} for detailed proof. 
	
\end{proof}

\section{Proof of Lemma 2 (a)}\label{app:C}
\lemmaestimationerror*
\begin{proof}
	Before proceeding to the proof of Lemma 2, we first characterize the change of the agents' state  (i.e., $x_i(t+1)-x_i(t)$) during the policy iteration. Consider Agent $i$'s state in the $q$-th policy iteration. Notice that $t= t_q + p$, where $t_q$ is the time index at the start of the $q$-th policy iteration and $p$ counts the number of time steps from the start of the $q$-th policy iteration, such that $p\in [1,N]$ and $t\in [t_q,t_q+N]$. Assume that $X_q = X(t_q)$ is the initial global state for the $q$-th policy iteration. Following the system dynamics defined in \eqref{eqn:globalsystem}, we obtain the global state after $p$ steps of policy evaluation as,
	\begin{equation*}
	\begin{aligned}
	X(t_q) &= X_q,\\
	X(t_q + 1) &= AX_q+B(K_qX_q + E(1)) = (A+BK_q)X_q + BE(1),\\
	X(t_q + p) &= (A+BK_q)^p X_{q} + \sum_{\tau=1}^{p}(A+BK_q)^{p-\tau}BE(\tau).\\
	\end{aligned}
	\end{equation*}
	
	\noindent Suppose Assumption \ref{asu:decaynoise} holds and correspondingly we have
	\begin{equation*}
	\begin{aligned}
	X_{q} & =X(t_{q-1} + N) \\
	&= (A+BK_{q-1})^{N}X_{q-1} + \sum_{\tau=1}^{N}(A+BK_{q-1})^{N-\tau}BE(\tau)\\
	&=(A+BK_{q-1})^{N}X_{q-1} + \sum_{\tau=1}^{N}(A+BK_{q-1})^{N-\tau}B\Upsilon(\tau)\boldsymbol{\beta}(\tau)\\
	&= (A+BK_{q-1})^{N}X_{q-1} + \sum_{\tau=1}^{N}(A+BK_{q-1})^{N-\tau}c^{\tau}B\boldsymbol{\beta}(\tau),\\
	\end{aligned}
	\end{equation*}
	which indicates that $\|X_q\|\to 0$ as $N \to \infty$.
	
	Hence, we  can obtain that
	\begin{equation*}
	\begin{aligned}
	X(t_q + p + 1) -X(t_q + p)  =& ((A+BK_q)^{p+1}-(A+BK_q)^{p})X_q \\
	&+ (A+BK_q)^{p}BE(1) + \sum_{\tau=1}^{p}(A+BK_q)^{p-\tau}B(E(\tau+1) - E(\tau)).
	\end{aligned}
	\end{equation*}  
	
	We further define       
	\begin{equation*}
	\begin{aligned}      
	r & \triangleq rank(X(t_q + p + 1) -X(t_q + p)),\\
	\|(A+BK_q)^k\|_2 &= \sigma_{max}^k(A+BK_q) < 1.\\
	\end{aligned}
	\end{equation*}
	
	Let $\|C\|_F$ denote the Frobenius norm of a matrix $C \in \mathbb{R}^{m \times n}$, i.e., $\|C\|_F = \sqrt{\textstyle\sum_{i=1}^m\sum_{j=1}^n c_{ij}^2} $. Now, we have,
	\begin{equation*}
	\begin{aligned}           
	\| X(t_q + p + 1) -X(t_q + p)\|_F  \leq& \sqrt{r}\| X(t_q + p + 1) -X(t_q + p)\|_2 \\ 
	\leq& \sqrt{r}\|(A+BK_q)^{p+1}-(A+BK_q)^{p}\|_2\|X_q\|_2 \\
	&+ \sqrt{r}\|(A+BK_q)^{p}\|_2\|E(1)\|_2 + \sqrt{r}\|\sum_{\tau=1}^{p}(A+BK_q)^{p-\tau}B(E(\tau+1) - E(\tau))\|_2\\
	\leq & \sqrt{r}\|(A+BK_q)^{p+1}-(A+BK_q)^{p}\|_2\|X_q\|_2 + \sqrt{r}\|(A+BK_q)^{p}\|_2\|E(1)\|_2 \\ 
	&+ \sqrt{r}\sum_{\tau=1}^{p}\Big(\|(A+BK_q)^{p-\tau}\|_2\|B(E(\tau+1) - E(\tau))\|_2 \Big)\\
	\leq& \sqrt{r}\|(A+BK_q)^{p+1}-(A+BK_q)^{p}\|_2\|X_q\|_2 +\sqrt{r}\sigma_{max}^t \Upsilon(1)\|\boldsymbol{\beta}(1)\|_2 \\
	&+ \sqrt{r}\sum_{\tau=1}^{p} (\sigma_{max}^{p-\tau} \|B(c^{\tau+1}\boldsymbol{\beta}(\tau+1) - c^{\tau}\boldsymbol{\beta}(\tau))\|_2 )\\
	\leq& \sqrt{r}\|(A+BK_q)^{p+1}-(A+BK_q)^{p}\|_2\|X_q\|_2 +\sqrt{r}\sigma_{max}^p \Upsilon(1)\|\boldsymbol{\beta}(1)\|_2 \\
	&+ c_{\beta}\sqrt{r}p\cdot \textrm{max}\{c,\sigma_{max}\}^p\\
	\triangleq & \delta_x,
	\end{aligned}
	\end{equation*}
	where $c_{\beta} = \textrm{max}_{\tau}\|B(c\boldsymbol{\beta}(\tau+1) - \boldsymbol{\beta}(\tau))\|$. $c_{\beta}$ is finite since the injected noise satisfies the PE condition in Assumption \ref{asu:decaynoise}. Notice that with a fixed $q$, when $p \to \infty$, $\delta_x \to 0$. Note that $K_q$ is a stable controller, such that $\sigma_{max} < 1$.

	Therefore, there exists $N<\infty$ and we can choose $ N_0 \in (0,N)$ such that when $t \in [t_q+N_0,t_{q+1}]$ ($p$ is large enough), the difference between two adjacent  states of Agent $i$ is bounded from above by some arbitrary small $\delta_x>0$, i.e., $\forall i\in [L]$:
	\begin{equation}
	\|x_{i}(t)-x_{i}(t-1)\| \leq \delta_{x}.
	\label{eqn:slowchange}
	\end{equation}
	
	Now we are ready to prove Lemma \ref{lemma:estimationerror} (a).
	
	Define $\epsilon_{ik}(t) =  \sum_{j \in \mathcal{N}_k} w_{ij} ({x}_{k}(t)-{x}_{k}(t-1)) $ and $ \bar{x}_{av,k}(t) = \frac{1}{L} \sum_{j=1}^{L}\bar{x}_{jk}(t) $ (the average estimation towards Agent $k$ at time instance $t$ ). Further, let $d_k = |\mathcal{N}_k|$ denote the  amount of communication neighbors of Agent $k$ in the communication network. The norm of the difference between the ST-based global state estimates $Z_i(t)$ and the true global state $X(t)$ can be shown as
	
	\begin{equation}
	\begin{aligned}
	\|{Z}_i(t)- {X}(t)\|_{F} &=  \sqrt{\sum_{k =1}^{L} \|x_k(t) - \bar{x}_{ik}(t)\|^2_2 }\\
	& \leq  \sum_{k =1}^{L} \sqrt{\|x_k(t) - \bar{x}_{ik}(t)\|^2 }\\
	& = \sum_{k =1}^{L} {\|x_k(t) - \bar{x}_{av,k}(t) + \bar{x}_{av,k}(t) - \bar{x}_{ik}(t)\| }\\
	& \leq \Big( \underbrace{{  \sum_{k =1}^{L}  \|x_{k}(t) - \bar{x}_{av,k}(t)\| }}_{\text{\circled{1}}} + \underbrace{{  \sum_{k =1}^{L}\|\bar{x}_{av,k}(t)- \bar{x}_{ik}(t)\| }}_{\text{\circled{2}}} \Big).\\
	\end{aligned}
	\label{equ:step1}
	\end{equation}
	First we consider term \text{\circled{1}} in \eqref{equ:step1},
	\begin{equation}
	\begin{aligned}
	{ \sum_{k =1}^{L} \|x_k(t) - \bar{x}_{av,k}(t)\| } &=   { \sum_{k =1}^{L} \|x_k(t) - \frac{1}{L}\sum_{j=1}^{L} \bar{x}_{jk}(t)\| }\\
	& = { \sum_{k =1}^{L} \|x_k(t) - \Big(  \bar{x}_{av,k}(0) + \frac{d_k}{L}{x}_{k}(t) - \frac{d_k}{L} {x}_{k}(0)  \Big)\| }\\
	& \leq L \max_k\{\|x_k(t)\|\}
	\triangleq w(t),
	\end{aligned}
	\label{equ:step1bound1}
	\end{equation}
	where we use the deduction of $\bar{x}_{av,k}(t)$:
	\begin{equation}
	\begin{aligned}
	\bar{x}_{av,k}(t) &=   \frac{1}{L}\sum_{j=1}^{L} \bar{x}_{jk}(t)\\
	& = \frac{1}{L}\sum_{j=1}^{L}\sum_{u=1}^{L} w_{ju}\hat{x}_{uk}(t)\\
	& = \frac{1}{L}\sum_{u=1}^{L}\hat{x}_{uk}(t)\\
	& = \frac{1}{L}(\sum_{u\in \mathcal{N}_k}{x}_{k}(t) + \sum_{u\notin \mathcal{N}_k}\bar{x}_{uk}(t-1))\\
	& = \bar{x}_{av,k}(t-1) + \frac{d_k}{L}{x}_{k}(t) - \frac{d_k}{L} {x}_{k}(t-1) \\
	& = \bar{x}_{av,k}(0) + \frac{d_k}{L}(x_k(t) - x_k(0)).
	\end{aligned}
	\end{equation}
	Now, consider term \text{\circled{2}}: Let Assumptions \ref{asu:connect}, \ref{asu:weightmatrix} hold and rewrite $\bar{x}_{ik}(t)$ into the following format,
	\begin{equation}
	\begin{aligned}
	\bar{x}_{ik}(t) &= \sum_{j=1}^{L} w_{ij} \hat{x}_{jk}(t) \\
	&= \sum_{j \notin \mathcal{N}_k} w_{ij} \hat{x}_{jk}(t) +  \sum_{j \in \mathcal{N}_k} w_{ij} \hat{x}_{jk}(t)\\
	&= \underbrace{\sum_{j \notin \mathcal{N}_k} w_{ij} \bar{x}_{jk}(t-1)}_{\text{Weighted Estimation}} +\underbrace{  \sum_{j \in \mathcal{N}_k} w_{ij} {x}_{k}(t) }_{\text{Weighted True State}} \\
	&= \sum_{j} w_{ij} \bar{x}_{jk}(t-1) + \underbrace{  \sum_{j \in \mathcal{N}_k} w_{ij} ({x}_{k}(t)-{x}_{k}(t-1)) }_{\text{Perturbation }\triangleq \epsilon_{ik}(t)}. \\
	\end{aligned}
	\label{equ:average}
	\end{equation}
	
	Following  the same line as in \cite{nedic2018distributed}, we reformulate \eqref{equ:average} as a perturbed consensus problem:
	\begin{equation}
	\begin{aligned}
	\overline{x}_{ik}(t) &=  \sum_{j} w_{ij} \bar{x}_{jk}(t-1) + \epsilon_{ik}(t),\\
	\epsilon_{ik}(t) &=  \sum_{j \in \mathcal{N}_k} w_{ij} ({x}_{k}(t)-{x}_{k}(t-1)).\\
	\end{aligned}
	\end{equation}
	
	For ease of exposition, we rewrite the evolution of the iterates $\overline{x}_{ik}(t)$ in a matrix form. For any coordinate index $l \in [n]$ (n is the dimension of the state vector), we can have the following for the $l-$th coordinate(denoted by a superscripts):
	$$ \overline{x}_{ik}^{l}(t) = \sum_{j} w_{ij} \bar{x}_{jk}^{l}(t-1) + \epsilon_{ik}^{l}(t), \quad \forall l \in [n]. $$
	
	Define $ \bar{x}_k(t-1) = \begin{bmatrix}
	\bar{x}_{1k}(t-1)\\
	\vdots\\
	\bar{x}_{Lk}(t-1)
	\end{bmatrix} $, 
	$\epsilon_p(t) = \begin{bmatrix}
	\epsilon_{1k}(t)\\
	\vdots\\
	\epsilon_{Lk}(t)
	\end{bmatrix} $.
	
	Next, we stack all of the $l-$th coordinates in a column vector, denoted by $\overline{x}_{k}^{l}(t)$, i.e.,
	\begin{equation*}
	\begin{aligned}
	\overline{x}_{k}^{l}(t) = \begin{bmatrix} \overline{x}_{1k}^{l}(t)\\
	\overline{x}_{2k}^{l}(t)\\
	\vdots\\
	\overline{x}_{Lk}^{l}(t)\\
	\end{bmatrix}
	= W \bar{x}_k^{l}(t-1) + \epsilon_{k}^{l}(t).
	\end{aligned}
	\end{equation*}
	
	Moreover, by stacking the column vectors $\overline{x}_{k}^{l}(t),~l \in [n] $ into a matrix $\mathbf{\overline{x}}_{k}(t)$, we further build up the perturbation matrix $\mathbf{{e}}_{k}(t)$ from ${{\epsilon}}_{k}^{l}(t),~l \in [n]$
	\begin{equation}
	\begin{aligned}
	\mathbf{\overline{x}}_{k}(t) = \begin{bmatrix}
	\overline{x}_{k}^{1}(t) & \overline{x}_{k}^{2}(t) & \cdots &
	\overline{x}_{k}^{n}(t)
	\end{bmatrix}
	&= W \mathbf{\overline{x}}_{k}(t-1) + \mathbf{{e}}_{k}(t) \quad \forall t \geq 0.
	\end{aligned}
	\label{equ:stack}
	\end{equation}
	
	Using the recursion, from Eqn.~(\ref{equ:stack}) we see that, for all $ t_q \leq t \leq t_{q+1}$,
	\begin{equation}
	\begin{aligned}
	\mathbf{\overline{x}}_{k}(t) &= W \mathbf{\overline{x}}_{k}(t-1) + \mathbf{{e}}_{k}(t)\\
	&=W(W \mathbf{\overline{x}}_{k}(t-2) + \mathbf{{e}}_{k}(t-1)) + \mathbf{{e}}_{k}(t)\\
	&=(W)^{2} \mathbf{\overline{x}}_{k}(t-2) + (W)^{1}\mathbf{{e}}_{k}(t-1) + \mathbf{{e}}_{k}(t)\\
	&=\cdots\\
	&= (W)^{p} \mathbf{\overline{x}}_{k}(t_q) + \sum_{\tau=1}^{p-1} (W)^{\tau}\mathbf{{e}}_{k}(t-\tau) + \mathbf{{e}}_{k}(t).
	\end{aligned}
	\label{equ:recursive}
	\end{equation}
	
	By multiplying both sides of \eqref{equ:recursive} with matrix $\frac{1}{L} \mathbf{11}^{\top}$, we have
	\begin{equation*}
	\begin{aligned}
	\frac{1}{L} \mathbf{11}^{\top} \mathbf{\overline{x}}_{k}(t) &= \frac{1}{L} \mathbf{11}^{\top}(W)^{p} \mathbf{\overline{x}}_{k}(t_q) +  \Big(\sum_{\tau=1}^{p-1} \frac{1}{L}\mathbf{11}^{\top}(W)^{\tau}\mathbf{{e}}_{k}(t-\tau))\Big) + \frac{1}{L}\mathbf{11}^{\top}\mathbf{{e}}_{k}(t)\\
	&= \frac{1}{L}\mathbf{11}^{\top} \mathbf{\overline{x}}_{k}(t_q) + \sum_{\tau=1}^{p-1} \frac{1}{L}\mathbf{11}^{\top}\mathbf{{e}}_{k}(t-\tau) + \frac{1}{L}\mathbf{11}^{\top}\mathbf{{e}}_{k}(t).
	\end{aligned}
	\end{equation*}
	
	Now, consider $\mathbf{\overline{x}}_{k}(t)-\frac{1}{L} \mathbf{11}^{\top}\mathbf{\overline{x}}_{k}(t) $
	\begin{equation}
	\begin{aligned}
	\mathbf{\overline{x}}_{k}(t)-\frac{1}{L} \mathbf{11}^{\top} \mathbf{\overline{x}}_{k}(t) =& \Big((W)^{p}-\frac{1}{L}\mathbf{11}^{\top}\Big) \mathbf{\overline{x}}_{k}(t_q) + \sum_{\tau=1}^{p-1}  \Big( (W)^{\tau} - \frac{1}{L}\mathbf{11}^{\top} \Big)\mathbf{{e}}_{k}(t-\tau)\\
	& + \Big( I -\frac{1}{L}\mathbf{11}^{\top} \Big)  \mathbf{{e}}_{k}(t) .
	\end{aligned}
	\label{equ:averagedifference}
	\end{equation}
	
	By taking the F-norm of the both sides of \eqref{equ:averagedifference}, we obtain,
	\begin{equation}
	\begin{aligned}
	\|\mathbf{\overline{x}}_{k}(t)-\frac{1}{L} \mathbf{11}^{\top} \mathbf{\overline{x}}_{k}(t)\|_{F}  \leq & \|\big((W)^{p}-\mathbf{11}^{\top}\big) \mathbf{\overline{x}}_{k}(t_q) \|_{F} +  \sum_{\tau=1}^{p-1}  \|\big( (W)^{\tau} - \frac{1}{L}\mathbf{11}^{\top} \big)\mathbf{{e}}_{k}(t-\tau)\|_{F}\\
	& + \|\big( I -\frac{1}{L}\mathbf{11}^{\top} \big)  \mathbf{{e}}_{k}(t) \|_{F}\\
	\leq& \|\big((W)^{p}-\mathbf{11}^{\top}\big)\|_{F} \|\mathbf{\overline{x}}_{k}(t_q) \|_{F} +  \sum_{\tau=1}^{p-1} \Big( \| (W)^{\tau} - \frac{1}{L}\mathbf{11}^{\top} \|_{F}  \|\mathbf{{e}}_{k}(t-\tau)\|_{F}\Big)\\
	& + \|\big( I -\frac{1}{L}\mathbf{11}^{\top} \big)\|_{F}  \|\mathbf{{e}}_{k}(t) \|_{F}.\\
	\end{aligned}
	\end{equation}
	The following lemma (Lemma 5 \cite{nedic2018distributed}) is required here. 
	\begin{lemma}
		Let the graph $G^c$ satisfy Assumption \ref{asu:connect} and let the weight matrix $W$ satisfy Assumption \ref{asu:weightmatrix}. Then, for all $s \geq 0$, 
		\begin{equation*}
		\Big( [W^{s}]_{ij} - \frac{1}{L} \Big)^2 \leq \Big( 1- \frac{\eta}{2L^2}\Big)^{s-1},\quad \forall i,j \in [L] .
		\end{equation*}
		\label{lemma:weight}
	\end{lemma}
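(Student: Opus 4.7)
The plan is to bound the scalar $([W^s]_{ij} - 1/L)^2$ by $\|W^s v_j\|_2^2$ for $v_j := e_j - \tfrac{1}{L}\mathbf{1}$, and then establish a per-step $\ell_2$ contraction of $W$ on the orthogonal complement of $\mathbf{1}$. Let $J := \tfrac{1}{L}\mathbf{1}\mathbf{1}^{\top}$. Since $W$ is doubly stochastic, $WJ = JW = J$ and $J^2 = J$, so $W^s - J = (W - J)^s$ for every $s \geq 1$, and $W$ maps the subspace $\{v : \mathbf{1}^{\top} v = 0\}$ into itself. Writing $[W^s]_{ij} - 1/L = e_i^{\top}(W^s - J)e_j = e_i^{\top} W^s v_j$ and applying Cauchy--Schwarz gives $([W^s]_{ij} - 1/L)^2 \leq \|W^s v_j\|_2^2$; since $\|v_j\|_2^2 = 1 - 1/L \leq 1$, it suffices to show that $W$ contracts by a factor $1 - \eta/(2L^2)$ on $\mathbf{1}$'s orthogonal complement, then iterate.

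For the one-step contraction I would use the Jensen identity, valid for any doubly stochastic $W$,
\begin{equation*}
\|v\|_2^2 - \|Wv\|_2^2 \;=\; \sum_{i}\sum_{j} w_{ij}\bigl(v_j - (Wv)_i\bigr)^2.
\end{equation*}
For each communication edge $(i,k)\in\mathcal{E}^c$, Assumption~\ref{asu:weightmatrix} gives $w_{ii},w_{ik}\geq\eta$, and the elementary bound $(v_i - v_k)^2 \leq 2(v_i - (Wv)_i)^2 + 2(v_k - (Wv)_i)^2$ lets me lower-bound the right side by $\tfrac{\eta}{2}\sum_{(i,k)\in\mathcal{E}^c}(v_i - v_k)^2$. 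Then a standard graph Poincar\'e estimate for the connected graph $\mathcal{G}^c$ on $L$ vertices (diameter $\leq L-1$) yields $\sum_{(i,k)\in\mathcal{E}^c}(v_i - v_k)^2 \geq \tfrac{1}{L^2}\|v\|_2^2$ whenever $\mathbf{1}^{\top} v = 0$, obtained by writing $v_i - v_j$ as a telescoping sum along a shortest path, applying Cauchy--Schwarz with path length $\leq L-1$, and then summing the identity $\sum_{i<j}(v_i - v_j)^2 = L\|v\|_2^2$ for mean-zero $v$. Combining these two steps gives $\|Wv\|_2^2 \leq (1 - \eta/(2L^2))\|v\|_2^2$.

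Iterating the one-step contraction yields $\|W^s v_j\|_2^2 \leq (1 - \eta/(2L^2))^s \|v_j\|_2^2 \leq (1 - \eta/(2L^2))^{s-1}$ for $s \geq 1$, matching (and in fact slightly sharpening) the claim; the degenerate case $s=0$ holds trivially since $([W^0]_{ij} - 1/L)^2 \leq 1 \leq (1 - \eta/(2L^2))^{-1}$. The main obstacle I expect is extracting the correct $\eta/(2L^2)$ constant: the coupling between the Jensen-gap $\sum_{i,j}w_{ij}(v_j-(Wv)_i)^2$ and the edge-differences $\sum_{(i,k)}(v_i-v_k)^2$ must be done with only a factor-of-two loss, and the combinatorial Poincar\'e constant $1/L^2$ must be obtained without invoking a spectral estimate that would entangle with $W$'s own spectrum. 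Both pieces are standard, but require some care to avoid worsening the constant beyond what the lemma states.
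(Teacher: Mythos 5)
First, a point of reference: the paper does not actually prove this statement --- it is quoted verbatim as ``Lemma 5'' of the cited work \cite{nedic2018distributed} and used as a black box, so there is no in-paper proof to compare yours against. Your self-contained route --- reduce $([W^{s}]_{ij}-1/L)^2$ to $\|W^{s}v_j\|_2^2$ with $v_j=e_j-\tfrac{1}{L}\mathbf{1}$, prove a one-step $\ell_2$ contraction on the mean-zero subspace via the Jensen-gap identity plus a combinatorial graph Poincar\'e inequality, then iterate --- is sound in outline: the identity $\|v\|_2^2-\|Wv\|_2^2=\sum_{i,j}w_{ij}(v_j-(Wv)_i)^2$ does hold for any doubly stochastic $W$, the mean-zero subspace is $W$-invariant, and the endgame bookkeeping ($\tfrac{\eta}{2}\cdot\tfrac{1}{L^2}$, the cases $s=0$ and $s\geq 1$, $\|v_j\|_2\leq 1$) is correct.

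There is, however, one genuine gap, exactly at the step you flag as delicate. You sum the per-edge bound $(v_i-v_k)^2\leq 2(v_i-(Wv)_i)^2+2(v_k-(Wv)_i)^2$ over all of $\mathcal{E}^{c}$ and compare the result to the Jensen gap. But the diagonal quantity $(v_i-(Wv)_i)^2$ is consumed once for every edge incident to $i$ that uses $(Wv)_i$ as its intermediate point, whereas the Jensen gap contains $w_{ii}(v_i-(Wv)_i)^2$ only once; since Assumption~\ref{asu:weightmatrix} gives only $w_{ii}\geq\eta$ and not $w_{ii}\geq\eta\cdot\mathrm{deg}(i)$, the comparison fails at high-degree vertices (a star graph with every edge expanded through the center's $(Wv)$ value is the canonical failure). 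The repair is standard and costs nothing in the constant: restrict to a spanning tree $T$ of $\mathcal{G}^{c}$ rooted arbitrarily and, for the edge joining a child $c$ to its parent $p(c)$, expand through $(Wv)_c$, so that each diagonal term is used at most once and each off-diagonal term $w_{c,p(c)}\bigl(v_{p(c)}-(Wv)_c\bigr)^2$ exactly once; the telescoping plus Cauchy--Schwarz Poincar\'e argument applied only to the tree's edges still yields $\sum_{\{i,k\}\in T}(v_i-v_k)^2\geq \tfrac{2}{(L-1)^2}\|v\|_2^2\geq\tfrac{1}{L^2}\|v\|_2^2$ for mean-zero $v$, so the final rate $1-\tfrac{\eta}{2L^2}$ survives. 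With that modification your argument is complete and, as you note, actually slightly sharpens the exponent from $s-1$ to $s$.
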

	
	Based Lemma \ref{lemma:weight}, we have
	\begin{equation*}
	\begin{aligned}
	\| (W)^{\tau} - \frac{1}{L}\mathbf{11}^{\top} \|_{F} &=\sqrt{\sum_{i=1}^{L} \sum_{j=1}^{L}\Big( [W^{\tau}]_{ij} - \frac{1}{L}\Big)^2  } \leq L \sqrt{\Big(1-\frac{\eta}{2L^2}\Big)^{\tau-1}}.
	\end{aligned}
	\end{equation*}
	Following the fact that $\sqrt{1-\mu} \leq 1- \frac{\mu}{2},\forall \mu\in(0,1)$, we further have,
	\begin{equation*}
	\begin{aligned}
	\| (W)^{\tau} - \frac{1}{L}\mathbf{11}^{\top} \|_{F} \leq  L {c_w^{\tau-1}},~c_w  = 1- \frac{\eta}{4L^2}.
	\end{aligned}
	\end{equation*}
	For the norm $\|I -\frac{1}{L}\mathbf{11}^{\top}\|_F $, we have
	\begin{equation}
	\|I -\frac{1}{L}\mathbf{11}^{\top}\|_F = \sqrt{L \Big( 1-\frac{1}{L}\Big)^2 + (L-1)L \frac{1}{L^2}} = \sqrt{L-1}.
	\end{equation}
	Now, we obtain that
	\begin{equation}
	\begin{aligned}
	\|\mathbf{\overline{x}}_{k}(t)-\frac{1}{L} \mathbf{11}^{\top} \mathbf{\overline{x}}_{k}(t)\|_{F} & \leq Lc_w^{p-1} \| \mathbf{\overline{x}}_{k}(t_q) \|_{F} + L \Big(\sum_{\tau=1}^{p-1} c_w^{\tau-1} \|\mathbf{\overline{e}}_{k}(t-\tau)\|_F  \Big) + \sqrt{L-1} \|\mathbf{\overline{e}}_{k}(t)\|_F.
	\end{aligned}
	\end{equation}
	This equation is equivalent to
	\begin{equation}
	\begin{aligned}
	\sqrt{\sum_{i=1}^{L} \|\bar{x}_{ik}(t) - \bar{x}_{av,k}(t)\|^2} \leq Lc_w^{p-1} \sqrt{\sum_{i=1}^{L}\|\bar{x}_{ik}(t_q)\|^2} & + L\Big( \textstyle\sum_{\tau=t-p+1}^{t-1} c_w^{t-\tau-1} \sqrt{\sum_{i=1}^{L} \| \epsilon_{ik}(\tau) \|^2} \Big)\\
	& + \sqrt{L-1} \sqrt{ \sum_{i=1}^{L} \| \epsilon_{ik}(t) \|^2}.
	\end{aligned}
	\label{equ:consensus}
	\end{equation}
	
	Recall that $\epsilon_{ik}(t) =  \sum_{j \in \mathcal{N}_k} w_{ij} ({x}_{k}(t)-{x}_{k}(t-1)) $, $ \bar{x}_{av,k}(t) = \frac{1}{L} \sum_{j=1}^{L}\bar{x}_{jk}(t) $ (the average estimation towards Agent $k$ at time instance $t$ ). Now, we are ready to obtain the upper bound on $\sum_{k =1}^{L} \|\bar{x}_{av,k}(t)- \bar{x}_{ik}(t)\|$:
	\begin{equation}
	\begin{aligned}
	\sum_{k =1}^{L} \|\bar{x}_{av,k}(t)- \bar{x}_{ik}(t)\|=&  \sum_{k \notin \mathcal{N}_i} \|\bar{x}_{av,k}(t)- \bar{x}_{ik}(t)\| +  \sum_{k \in \mathcal{N}_i} \|\bar{x}_{av,k}(t)- {x}_{k}(t)\|\\
	\leq&  L^{\frac{1}{2}} \sqrt{\sum_{i=1}^{L} \|\bar{x}_{ik}(t) - \bar{x}_{av,k}(t)\|^2} + w(t)\\
	\leq& L^{\frac{3}{2}}c_w^{p-1} \sqrt{\sum_{i=1}^{L}\|\bar{x}_{ik}(t_q)\|^2}  + L^{\frac{3}{2}}\Big( \textstyle\sum_{\tau=t-p+1}^{t-1} c_w^{t-\tau-1} \sqrt{\sum_{i=1}^{L} \| \epsilon_{ik}(\tau) \|^2} \Big)\\
	& + L^{\frac{1}{2}}\sqrt{L-1} \sqrt{ \sum_{i=1}^{L} \| \epsilon_{ik}(t) \|^2} + (L-d_k)\|x_k(t)\|\\
	\triangleq& v(t),
	\end{aligned}
	\label{equ:step1bound2}
	\end{equation}
	where the first inequality is based on the H\"{o}lder's inequality. Note that $t$ denotes the time instance from the start of the algorithm, i.e., $t = t_q + p$. Thus, we can obtain the following result from \eqref{eqn:slowchange}
	\begin{equation*}
	\begin{aligned}
	\lim_{p\to \infty}\|\epsilon_{ik}(t)\| &=  \lim_{p\to \infty}\|\epsilon(p)\| = 0.
	\end{aligned}
	\end{equation*}
	
	\noindent Besides, the second term of \eqref{equ:consensus} satisfies
	\begin{equation*}
	\begin{aligned}
	&\sum_{\tau=t-p+1}^{t-1} c_w^{t-\tau-1} \sqrt{\sum_{i=1}^{L} \| \epsilon_{ik}(\tau) \|^2} \\
	&\leq  \sum_{\tau=t-p+1}^{t-1} Lc_w^{t-\tau-1}\|   \epsilon_{ik}(\tau)\|\\
	&= \sum_{\tau=t-p+1}^{t-1} Lc_w^{t-\tau-1} \frac{1}{\sum_{\tau=t-p+1}^{t-1} Lc_w^{t-\tau-1}} \sum_{\tau=t-p+1}^{t-1} Lc_w^{t-\tau-1}\|   \epsilon_{ik}(\tau)\|\\
	&=\frac{L(1-c_w^{p-1})}{1-c_w}\Big( \frac{1}{\sum_{\tau=t-p+1}^{t-1} Lc_w^{t-\tau-1}} \sum_{\tau=t-p+1}^{t-1} Lc_w^{t-\tau-1}\|   \epsilon_{ik}(\tau)\|\Big).\\
	\end{aligned}
	\end{equation*}
	Let $p \to \infty$. We have
	\begin{equation}
	\begin{aligned}
	\lim_{p\to \infty} \sum_{\tau=t-p+1}^{t-1} c_w^{t-\tau-1} \sqrt{\sum_{i=1}^{L} \| \epsilon_{ik}(\tau) \|^2} &\leq \frac{L}{1-c_w}\lim_{p \to \infty}\|\epsilon(p)\| = 0,
	\end{aligned}
	\end{equation}
	where the right side follows Mazur's Lemma that any convex combination of a convergent sequence $\{\epsilon(p)\}$ converges to the same limit as the sequence itself.
	
	Following the preceding results, finally we obtain that
	\begin{equation}
	\begin{aligned}
	\|{Z}_i(t)- {X}(t)\| & \leq \Big( w(t) + v(t)\Big)  \triangleq \delta(t),\quad t_q \leq t \leq t_{q+1}.
	\end{aligned}
	\label{equ:step1bound}
	\end{equation}
	Note that when $t \rightarrow \infty$ ($p \to \infty$), $w(t) \rightarrow 0$ (see \eqref{equ:step1bound1}), $v(t) \rightarrow 0$ (see \eqref{equ:step1bound2}). Thus, for all  $\delta>0$, there exist $N$ and  $N_0\in(0,N)$ such that, $\forall t \in [t_q+N_0, t_{q+1}]$,
	\begin{equation}
	\begin{aligned}
	\|{Z}_i(t)- {X}(t)\| & \leq \delta,~\forall i \in [L],
	\end{aligned}
	\label{eqn:accuratesamples}
	\end{equation}
	which indicates that with large enough $N$, we are able to collect $(N-N_0)$ samples with relative error $\delta$ (See Fig. \ref{fig:accuratesamples}).
	\begin{figure}
		\centering
		\includegraphics[width=.53\textwidth]{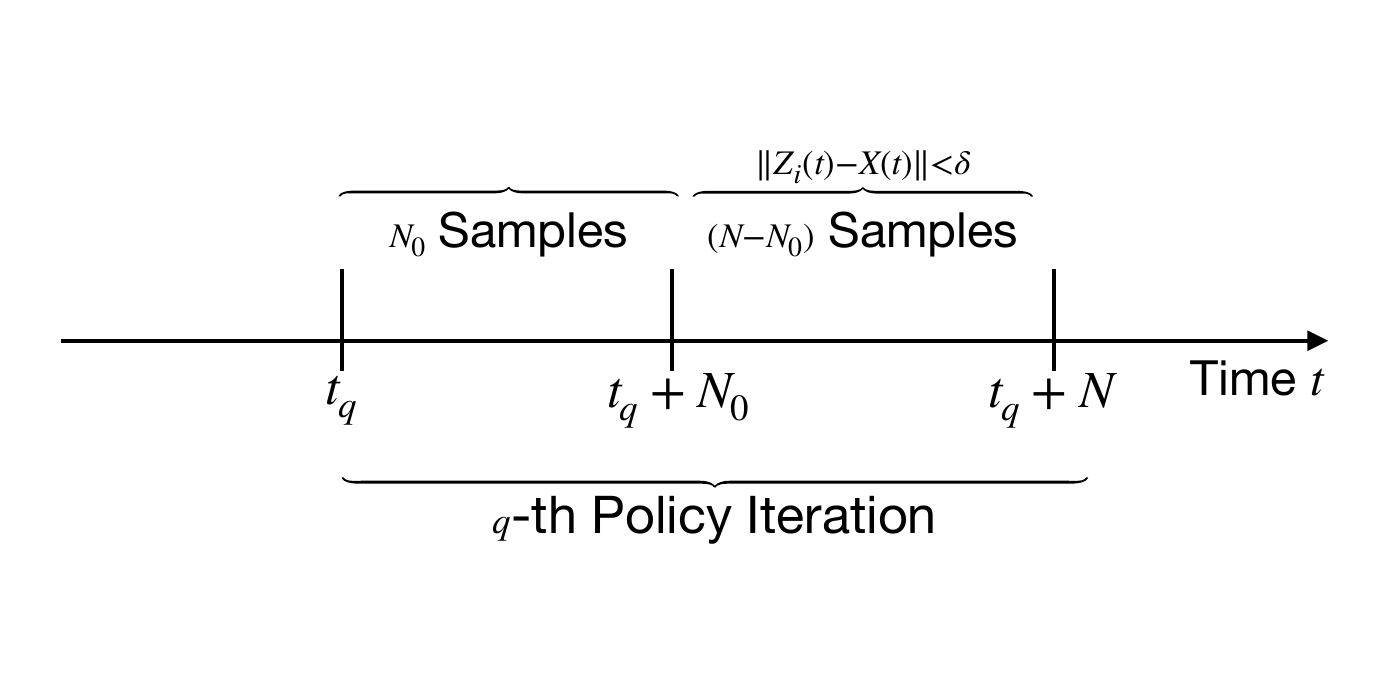}
		\caption{Illustration of the result in \eqref{eqn:accuratesamples}.}
		\label{fig:accuratesamples}
	\end{figure}

\end{proof}
\section{Proof of Lemma 2 (b)}\label{app:D}
\lemmaestimationerror*
\begin{proof}
	Recall the SGD update rule in \eqref{eqn:gradient},
	
	\begin{equation}
	\begin{aligned}
	\textstyle \theta_{iq}(p+1) &= \theta_{iq}(p) - \alpha\phi_i(t)\cdot\Big(\theta_{iq}(p)^{\top}\phi_i(t) - g_i(t) \Big) , \\  
	\textstyle \hat{\theta}_{iq}(p+1) &= \hat{\theta}_{iq}(p) - \alpha\hat{\phi}_i(t)\cdot\Big(\hat{\theta}_{iq}(p)^{\top}\hat{\phi}_i(t) - \hat{g}_i(t) \Big).  \\  
	\end{aligned}
	\end{equation}
	
	We first define
	\begin{equation}
	\begin{aligned}
	\Phi_i(t_q+N-\tau) &\triangleq I-\alpha\phi_i(t_q+N-\tau)\phi_i^{\top}(t_q+N-\tau),\\
	\Pi_i(M) &\triangleq \prod_{m=1}^{M}\Phi_i(t_q+N-m),\\
	G_i(t_q+N-\tau)& \triangleq \alpha\phi_i(t_q+N-\tau) g_i(t_q+N-\tau).\\ 
	\end{aligned}
	\end{equation}
	Next, we use recursion to obtain the relationship between $\theta_{iq} = \theta_{iq}(N)$ and $\theta_{i(q-1)} = \theta_{iq}(1)$,
	\begin{equation}
	\begin{aligned}
	\theta_{iq} &= 
	\Pi_i(N) \theta_{i(q-1)} + \sum_{\tau=2}^{N} \Pi_i(\tau-1)G_i(t_q+N-\tau) + G_i(t_q+N-1),\\
	\hat{\theta}_{iq} &= 
	\hat{\Pi}_i(N) \hat{\theta}_{i(q-1)} + \sum_{\tau=2}^{N}\hat{\Pi}_i(\tau-1) \hat{G}_i(t_q+N-\tau) +  \hat{G}_i(t_q+N-1).\\
	\end{aligned}
	\end{equation}
	\noindent Therefore, we have
	\begin{equation}
	\begin{aligned}
	\|  {\theta}_{iq} - \hat{\theta}_{iq}\| =&  \|  {\Pi}_i(N) \theta_{i(q-1)}-   \hat{\Pi}_i(N)  \hat{\theta}_{i(q-1)}\| \rightarrow \text{\circled{\scriptsize1}} \\
	&+\| G_i(t_q+N-1)-\hat{G}_i(t_q+N-1) \| \rightarrow \text{\circled{\scriptsize2}}\\
	&+\|  \sum_{\tau=2}^{N}\Big( \Pi_i(\tau-1)G_i(t_q+N-\tau)  -   \hat{\Pi}_i(\tau-1) \hat{G}_i(t_q+N-\tau)\Big)\| \rightarrow \text{\circled{\scriptsize3}}.\\
	\end{aligned}
	\end{equation}
	
	In order to utilize the result in Appendix \ref{app:C}, we first explore the relationship between $\phi_i(t)$ and $\|Z_i(t) - X(t) + \hat{u}_i(t) - u_i(t)\|$. It can be shown that the norm of $y_i(t)$ is equivalent to the $F-$norm of the  product of two matrices, i.e. $E_i(t)M_i(t)$,
	\begin{equation}
	\begin{aligned}
	y_i(t) &= \begin{bmatrix}
	x_1^2(t) & x_1(t)x_2(t) & x_1(t)x_3(t) \cdots u_i^2(t)
	\end{bmatrix} \\
	&= 
	\underbrace{ \begin{bmatrix}
		x_1(t) &  & & & \\
		&  x_2(t) &  & & \\
		&   &  \ddots & & \\
		&   & & x_L(t) & \\
		&   &  & & u_i(t)\\
		\end{bmatrix} }_{\triangleq E_i(t)}
	\underbrace{\begin{bmatrix}
		x_1(t) & x_2(t) & \cdots & x_L(t) & u_i(t)\\
		x_2(t) & x_3(t) & \cdots &  u_i(t) &\\
		\vdots& \vdots & & &\\
		u_i(t) & & & & \\
		\end{bmatrix}.}_{\triangleq  M_i(t)}
	\end{aligned}
	\end{equation}
	
	Then, we can apply the result in Appendix \ref{app:C}  ($\delta(t)$ is  defined in \eqref{equ:step1bound}, $\hat{K}_{i(q-1)}$ denotes the controller obtained by using estimated global state, $(q-1)$ denotes the controller is updated in the last policy update). Note that $ t \in [t_q,t_{q+1}]$ denotes the $q-$th policy evaluation time instant.
	\begin{equation}
	\begin{aligned}
	T_{q-1} &\triangleq \|\theta_{i(q-1)} - \hat{\theta}_{i(q-1)} \|, \\
	\|u_i(t) - \hat{u}_i(t)\|&=\|\hat{K}_{i(q)}Z_i(t) - K_{i(q)}X(t)\|,\\
	&\leq \|\hat{K}_{i(q)}-{K}_{i(q)}\| \|X(t)\| + \|X(t)-Z_i(t)\|\|\hat{K}_{i(q)}\|,\\
	&\leq \kappa \|X(t)\| T_{q-1} + b_k\delta(t),
	\end{aligned}
	\end{equation}
	where $b_k < \infty$ is the upper bound of $\|\hat{K}_{i(q-1)}\|$. The last inequality follows that
	\begin{equation*}
	\begin{aligned}
	\hat{K}_{i(q)} &= - \hat{H}_{i(q-1),22}^{-1} \hat{H}_{i(q-1),21},\\
	{K}_{i(q)} &= -{H}_{i(q-1),22}^{-1} {H}_{i(q-1),21}.\\ 
	\end{aligned}
	\end{equation*}
	Then, we have
	\begin{equation*}
	\begin{aligned}
	\hat{K}_{i(q)} -   {K}_{i(q)} &= - \hat{H}_{i(q-1),22}^{-1} \hat{H}_{i(q-1),21} + {H}_{i(q-1),22}^{-1} {H}_{i(q-1),21}\\
	& = {H}_{i(q-1),22}^{-1}(({H}_{i(q-1),21} -\hat{H}_{i(q-1),21} ) + (\hat{H}_{i(q-1),22}-{H}_{i(q-1),22})\hat{H}_{i(q-1),22}^{-1}\hat{H}_{i(q-1),21}),
	\end{aligned}
	\end{equation*}
	\begin{equation*}
	\begin{aligned}
	\|\hat{H}_{i(q-1),22}-{H}_{i(q-1),22}\| & \leq \|\hat{\theta}_{i(q-1)} - {\theta}_{i(q-1)}\|,\\
	\|{H}_{i(q-1),22}\| & \leq \|\theta_{i(q-1)}\| .\\ 
	\end{aligned}
	\end{equation*}
	Since the estimated parameters are bounded and $\kappa > 0$ is a finite  constant, it follows that
	\begin{equation*}
	\begin{aligned}
	\| \hat{K}_{i(q)} -   {K}_{i(q)} \| \leq \kappa  \|\hat{\theta}_{i(q-1)} - {\theta}_{i(q-1)}\|.
	\end{aligned}
	\end{equation*}
	Therefore,
	\begin{equation}
	\begin{aligned}
	\|g_i(t) - \hat{g}_i(t)\|&=\|\langle x_i(t),x_i(t)\rangle_P +\langle u_i(t),u_i(t)\rangle_R - \langle \hat{x}_i(t),\hat{x}_i(t)\rangle_P +\langle \hat{u}_i(t),\hat{u}_i(t)\rangle_R\|\\
	=&\|\langle x_i(t)-\hat{x}_i(t),x_i(t)\rangle_P + \langle \hat{x}_i(t),x_i(t)-\hat{x}_i(t)\rangle_P \\
	&+\langle u_i(t)-\hat{u}_i(t),u_i(t)\rangle_R - \langle \hat{u}_i(t),u_i(t)-\hat{u}_i(t)\rangle_R\|\\
	\leq& \lambda_{max}(P)(2b_x\delta(t)) + 2\lambda_{max}(R)(\kappa \|X(t)\| T_{q-1} + b_k\delta(t))b_u\\
	=& c_1\delta(t) + c_2  \|X(t)\| T_{q-1},\\
	c_1 \triangleq&  2b_x\lambda_{max}(P)+b_kb_u\lambda_{max}(R)<\infty,\\
	c_2 \triangleq& \lambda_{max}(R)\kappa<\infty,\\
	\langle x_i(t),x_i(t) \rangle_{P}  =& x_i(t)^{T}Px_i(t),\\
	\end{aligned}
	\label{equ:g}
	\end{equation}
	where $b_x < \infty$ is the upper bound of the system state $\|x_i(t)\|$ and $b_u < \infty$ is the upper bound of the system input $\|u_i(t)\|$. $\lambda_{max}(P)$ is the largest eigenvalue of matrix $P$ and $\lambda_{max}(R)$ is the largest eigenvalue of matrix $R$. Hence, we obtain the following inequality with regard to $E_i$ and $M_i$
	\begin{equation}
	\begin{aligned}
	\|E_i(t) - \hat{E}_i(t) \| &= \|Z_i(t) - X(t) \| + \|u_i(t) - \hat{u}_i(t)\| \\
	& \leq 2b_k \delta(t) + \kappa \|X(t)\| T_{q-1} \\
	&\triangleq c_3 \delta(t) + c_4 \|X(t)\|T_{q-1},\\
	\|M_i(t) - \hat{M}_i(t) \| &\leq L \|Z_i(t) - X(t) \| + (L+1) \|u_i(t) - \hat{u}_i(t)\| \\
	& \leq (2L+1) b_k \delta(t) + (L+1)\kappa \|X(t)\| T_{q-1} \\
	&\triangleq c_5 \delta(t) + c_6 \|X(t)\|T_{q-1}.
	\end{aligned}
	\end{equation}
	Furthermore, we can obtain that
	\begin{equation}
	\begin{aligned}
	\|\phi_i(t) - \hat{\phi}_i(t)\| &= \| E_i(t)M_i(t) - E_i(t+1)M_i(t+1) - \hat{E}_i(t)\hat{M}_i(t) + \hat{E}_i(t+1)\hat{M}_i(t+1) \|\\
	&\leq \| E_i(t)M_i(t) -\hat{E}_i(t)\hat{M}_i(t)\| + \|E_i(t+1)M_i(t+1) -\hat{E}_i(t+1)\hat{M}_i(t+1)  \|\\
	& \leq (c_3b_m+c_5b_x)(\delta(t) +\delta(t+1)) + (c_4b_m+c_6b_x)(\|X(t)\|+\|X(t+1)\|))T_{q-1}\\
	& \triangleq c_{7} (\delta(t) +\delta(t+1)) + c_8(\|X(t)\|+\|X(t+1)\|) T_{q-1}.
	\end{aligned}
	\label{equ:phi}
	\end{equation}
	Thus,
	\begin{equation}
	\begin{aligned}
	\|(I -\alpha \hat{\phi}_i(t)\hat{\phi}_i^{\top}(t) ) - (I -\alpha {\phi}_i(t){\phi}_i^{\top}(t) ) \|
	&= \|\alpha \hat{\phi}_i(t)\hat{\phi}_i^{\top}(t)-\alpha {\phi}_i(t){\phi}_i^{\top}(t)\|\\
	&\leq c_{9} (\delta(t) +\delta(t+1)) + c_{10}(\|X(t)\|+\|X(t+1)\|) T_{q-1}.
	\end{aligned}
	\label{equ:phiphi}
	\end{equation}
	Now, we are ready to analyze term \circled{\scriptsize1}: $\|  {\Pi}_i(N) \theta_{i(q-1)}-   \hat{\Pi}_i(N)  \hat{\theta}_{i(q-1)}\|$. 
	Note that
	\begin{equation}
	\begin{aligned}
	\|{\Pi}(n)\| \leq {c}_{\pi}^{n} \leq \bar{c}_{\pi}^n,\\
	\|\hat{\Pi}(n)\| \leq \hat{c}_{\pi}^{n} \leq \bar{c}_{\pi}^n,\\
	\bar{c}_{\pi} = \max\{c_{\pi},\hat{c}_{\pi}\},
	\end{aligned}
	\label{equ:pipi0}
	\end{equation}
	where $c_{\pi}=\max_t\{\|\Phi_i(t)\|\}<1 $ and $\hat{c}_{\pi}=\max_t\{\|\hat{\Phi}_i(t)\|\}<1 $.
	
	\noindent Hence, we have  
	\begin{equation}
	\begin{aligned}
	\|  {\Pi}_i(N) -\hat{\Pi}_i(N)\|
	&=\| \prod_{m=1}^{N}\Phi_i(t_q+N-m) - \prod_{m=1}^{N}\hat{\Phi}_i(t_q+N-m)\|\leq \bar{c}_{\pi}^N.
	\end{aligned}
	\label{equ:pipi}
	\end{equation}
	Now, we consider
	\begin{equation}
	\begin{aligned}
	\|  {\Pi}(N) \theta_{i(q-1)}-  \hat{\Pi}(N)  \hat{\theta}_{i(q-1)}\| &\leq \|\Pi(N) - \hat{\Pi}(N)\|\|\hat{\theta}_{i(q-1)}\| + \|\hat{\theta}_{i(q-1)}- {\theta}_{i(q-1)}\|\| {\Pi}(N)\|\\
	&\leq c_{11}\bar{c}_{\pi}^N + \|\Pi(N)\| T_{q-1}\\
	& \triangleq \xi_1.
	\end{aligned}
	\label{equ:item1}
	\end{equation}
	Notice that as $N \to \infty$, $\xi_1 \to 0$.
	
	\noindent Similarly, we analyze term \circled{\scriptsize2}: $\| G(t_q+N-1)-\hat{G}(t_q+N-1) \|$,
	\begin{equation}
	\begin{aligned}
	\| G(t_q+N-1)-\hat{G}(t_q+N-1) \| =&\alpha\|\phi^{\top}(t_q+N-1)g(t_q+N-1)-\hat{\phi}^{\top}(t_q+N-1)\hat{g}(t_q+N-1)\|\\
	\leq& \| \phi^{\top}(t_q+N-1)-\hat{\phi}^{\top}(t_q+N-1)\| \|g(t_q+N-1) \| \\
	&+ \|g(t_q+N-1)-\hat{g}(t_q+N-1)\| \|\hat{\phi}^{\top}(t_q+N-1) \|\\
	\leq & c_{12}\delta(t_q+N) + c_{13}\delta(t_q+N-1) \\
	&+ \big(c_{14}\|X(t_q+N)\| + c_{15}\|X(t_q+N-1)\|\big)T_{q-1}\\
	\triangleq& \xi_2,
	\end{aligned}
	\label{equ:gg}
	\end{equation}
	where the second inequality follows  from \eqref{equ:g} and \eqref{equ:phi}. Notice that as $N \to \infty$, $\xi_2 \to 0$. \eqref{equ:gg} also indicates that for all $\epsilon_{2}>0$, there exists a $0<N_2<\infty$, when $N > N_2$, such that,
	\begin{equation}
	\begin{aligned}
	\| G(t_q+N)-\hat{G}(t_q+N) \| &<  \epsilon_{2},\\
	\| G(t_q+N)\| &< \epsilon_{2}.
	\end{aligned}
	\label{equ:boundG}
	\end{equation}
	
	\noindent Consider term \circled{\scriptsize3}: $\|  \sum_{\tau=2}^{N}\Big( \Pi(\tau-1)G(t_q+N-\tau)  -   \hat{\Pi}(\tau-1) \hat{G}(t_q+N-\tau)\Big)\|$,
	
	\begin{equation}
	\begin{aligned}
	&\|  \sum_{\tau=2}^{N}\Big( \Pi(\tau-1)G(t_q+N-\tau)  -   \hat{\Pi}(\tau-1) \hat{G}(t_q+N-\tau)\Big)\|\\
	\leq&  \sum_{\tau=2}^{N}\|\Pi(\tau-1)-\hat{\Pi}(\tau-1)\| \|G(t_q+N-\tau) \| + \|G(t_q+N-\tau)-\hat{G}(t_q+N-\tau) \| \| \hat{\Pi}(\tau-1)\|\\
	=&  \sum_{\tau=2}^{N^{'}}\Big( \|\Pi(\tau-1)-\hat{\Pi}(\tau-1)\| \|G(t_q+N-\tau) \| + \|G(t_q+N-\tau)-\hat{G}(t_q+N-\tau) \| \| \hat{\Pi}(\tau-1)\|\Big)\\
	& + \sum_{\tau=N{'}}^{N}\Big( \|\Pi(\tau-1)-\hat{\Pi}(\tau-1)\| \|G(t_q+N-\tau) \| + \|G(t_q+N-\tau)-\hat{G}(t_q+N-\tau) \| \| \hat{\Pi}(\tau-1)\|\Big),\\
	\end{aligned}
	\label{equ:thetatheta}
	\end{equation}
	where $N^{'} = N-N_2-1 > N_2$ (assuming $N$ is large enough). Using the result from \eqref{equ:pipi0} and \eqref{equ:boundG},  we further obtain,
	\begin{equation}
	\begin{aligned}
	&\|\sum_{\tau=2}^{N}\Big( \Pi(\tau-1)G(t_q+N-\tau)  - \hat{\Pi}(\tau-1) \hat{G}(t_q+N-\tau)\Big)\|\\
	\leq &  \sum_{\tau=2}^{N^{'}}\Big( \|\Pi(\tau-1)-\hat{\Pi}(\tau-1)\| \|G(t_q+N-\tau) \| + \|G(t_q+N-\tau)-\hat{G}(t_q+N-\tau) \| \| \hat{\Pi}(\tau-1)\|\Big)\\
	& +  \sum_{\tau=N{'}}^{N} \Big( \|\Pi(\tau-1)-\hat{\Pi}(\tau-1)\| \|G(t_q+N-\tau) \| + \|G(t_q+N-\tau)-\hat{G}(t_q+N-\tau) \| \| \hat{\Pi}(\tau-1)\|\Big)\\
	< & \sum_{\tau=2}^{N^{'}}\Big( \bar{c}_{\pi}^{\tau-1}\epsilon_2 \Big) + \sum_{\tau=N{'}}^{N}\Big(b_g\bar{c}_{\pi}^{\tau-1} \Big) \\
	= & \frac{\bar{c}_{\pi}(1-\bar{c}_{\pi}^{N^{'}-1})}{1-\bar{c}_{\pi}} \epsilon_2  + \frac{\bar{c}_{\pi}^{N^{'}-1}(1-\bar{c}_{\pi}^{N-N^{'}+1})}{1-\bar{c}_{\pi}}b_g\\
	= &  \frac{\bar{c}_{\pi}(1-\bar{c}_{\pi}^{N-N_2-2})}{1-\bar{c}_{\pi}} \epsilon_2  + \frac{\bar{c}_{\pi}^{N}(\bar{c}_{\pi}^{-2-N_2}-1)}{1-\bar{c}_{\pi}}b_g\\
	\triangleq & \xi_3,
	\end{aligned}
	\label{equ:thetatheta2}
	\end{equation}
	where $b_{g} = \max_{\tau}\{ \|G(t_q+N-\tau)-\hat{G}(t_q+N-\tau) \|  , \|G(t_q+N-\tau)\| \} < \infty$. Notice that when $N \to \infty$,  $\xi_3 \to 0$.
	
	\noindent When $N>2N_2+1$, by combing the results from \eqref{equ:item1}, \eqref{equ:gg} and \eqref{equ:thetatheta2}, we obtain the upper bound on $  \|  {\theta}_{iq} - \hat{\theta}_{iq}\|$: 
	\begin{equation}
	\begin{aligned}
	\|  {\theta}_{iq} - \hat{\theta}_{iq}\| &= 
	\text{\circled{\scriptsize1}}+\text{\circled{\scriptsize2}}+\text{\circled{\scriptsize3}}\\
	< &  \xi_1+ \xi_2+ \xi_3\\
	=& c_{11}\bar{c}_{\pi}^N +  \frac{\bar{c}_{\pi}(1-\bar{c}_{\pi}^{N-N_2-2})}{1-\bar{c}_{\pi}} \epsilon_2 + \frac{\bar{c}_{\pi}^{N}(\bar{c}_{\pi}^{-2-N_2}-1)}{1-\bar{c}_{\pi}}b_g + c_{12}\delta(t_q+N) + c_{13}\delta(t_q+N-1) \\
	&+ \big(c_{14}\|X(t_q+N)\|\| + c_{15}\|X(t_q+N-1)\|+ \|\Pi(N)\|\big)T_{q-1}\\
	=& \zeta(N) + \psi(N)T_{q-1} ,\\
	\zeta(N)&\triangleq c_{11}\bar{c}_{\pi}^N +  \frac{\bar{c}_{\pi}(1-\bar{c}_{\pi}^{N-N_2-2})}{1-\bar{c}_{\pi}} \epsilon_2 + \frac{\bar{c}_{\pi}^{N}(\bar{c}_{\pi}^{-2-N_2}-1)}{1-\bar{c}_{\pi}}b_g + c_{12}\delta(t_q+N) + c_{13}\delta(t_q+N-1),  \\
	\psi(N)&\triangleq c_{14}\|X(t_q+N)\| + c_{15}\|X(t_q+N-1)\|+ \|\Pi(N)\|.
	\end{aligned}
	\label{equ:lemma4}
	\end{equation}
	Notice that when $N \to \infty$, $\psi(N) \to 0$ and $\zeta(N) \to 0$. 
	
	\noindent Further we consider,
	\begin{equation*}
	\begin{aligned}
	T_q-T_{q-1} &= \zeta(N) + (\psi(N)-1)T_{q-1}\\
	&=\zeta(N) + (\psi(N)-1)(\zeta(N) + \psi(N)T_{q-2})\\
	&= \psi(N)(\zeta(N) + (\psi(N)-1)T_{q-2})\\
	&= \psi(N)(T_{q-1}-T_{q-2}).\\
	\end{aligned}
	\end{equation*}
	
	\noindent We observe that when $N$ is large enough, $\psi(N)<1$, such that
	\begin{equation}
	\left| T_q-T_{q-1} \right|<\left| T_{q-1}-T_{q-2} \right|<\cdots<\left|T_1-T_0\right|.
	\end{equation}
	Notice that when $q$ is large enough, $\theta_{iq}$ converges to optimal (Lemma \ref{lemma:fullobservation}), thus, we can now draw the conclusion: for any $\xi>0$, there exist $N < \infty$ and policy improvement step $q < \infty$ such that,
	\begin{equation}
	T_q= \|  {\theta}_{iq} - \hat{\theta}_{iq}\| \leq \xi.
	\end{equation}

\end{proof}

\end{document}